\newtheorem{theorem}{Theorem}[section]
\newtheorem{assumption}[theorem]{Assumption}
\newtheorem{lemma}[theorem]{Lemma}
\newtheorem{remark}[theorem]{Remark}
\newcommand{\diag}{\mathop{\rm diag}\nolimits}
\begin{document}
	%
	\title{Stabilization of Continuous-time 
		Switched Linear Systems with
		Quantized Output Feedback \thanks{
This technical note was partially presented at the 21st international
symposium on mathematical theory of networks and systems,
July 7-11, 2014, Netherlands.}
	}
	
	\author{Masashi~Wakaiki,~\IEEEmembership{Member,~IEEE,}
		and 
		Yutaka~Yamamoto,~\IEEEmembership{Fellow,~IEEE}
		\thanks{M. Wakaiki is with the Center for Control,
			Dynamical-systems and Computation (CCDC), University of California,
			Santa Barbara, CA 93106-9560, USA
			(e-mail:{\tt  \ masashiwakaiki@ece.ucsb.edu})
			Y. Yamamoto is with Department of Applied Analysis and Complex
			Dynamical Systems, Graduate School of Informatics, Kyoto University, Kyoto
			606-8501, Japan.
			.}
		\thanks{
			M. Wakaiki acknowledges Murata Overseas Scholarship Foundation
			for the support of this work.}
	}

	\maketitle
	
	\begin{abstract}
In this paper, we study the problem of stabilizing
continuous-time switched linear systems with 
quantized output feedback.
We assume that the observer and the control gain are given 
for each mode.
Also,
the plant mode is known to the controller and the quantizer.
Extending the result in the non-switched case,
we develop an update rule of the quantizer to achieve
asymptotic stability of the closed-loop system under
the average dwell-time assumption.
To avoid quantizer saturation,
we adjust the quantizer
at every switching time.

	\end{abstract}
	
	\begin{IEEEkeywords}
		Switched systems, Quantized control, 
		Output feedback stabilization.
	\end{IEEEkeywords}
	
	%
	\IEEEpeerreviewmaketitle

\section{Introduction}
Quantized control problems have been an active research topic
in the past two decades.
Discrete-level actuators/sensors and digital communication channels
are typical in practical control systems, and 
they yield quantized signals in feedback loops.
Quantization errors
lead to poor system performance and even loss of stability. Therefore,
various control techniques to explicitly
take quantization into account have been proposed,
as surveyed in \cite{Nair2007, Ishii2012}.

On the other hand, switched system models are widely used 
as a mathematical framework to represent both
continuous and discrete dynamics. For example,
such models are applied to
DC-DC converters \cite{Deaecto2010} and
to 
car engines \cite{Rinehart2008}.  
Stability and stabilization of switched systems
have also been extensively studied; see, e.g.,
the survey \cite{Shorten2007, Lin2009},
the book \cite{Liberzon2003Book}, and many references therein.

In view of the practical importance of both research areas
and common technical tools to study them,
the extension of
quantized control to switched systems has recently received 
increasing attention.
There is by now a stream of papers on
control with limited information for discrete-time 
Markovian jump systems \cite{Nair2003, Xiao2010, Xu2013}.
Moreover, our previous work \cite{Wakaiki2014IFAC} has analyzed 
the stability of sampled-data switched systems with static quantizers.

In this paper, we study the stabilization of continuous-time switched
linear systems with quantized output feedback.
Our objective is to solve the following problem:
Given a switched system and a controller, design
a quantizer to achieve asymptotic stability of the closed-loop
system. 
We assume that the information of the currently active plant mode
is available to the controller and the quantizer.
Extending the quantizer in \cite{Brockett2000, Liberzon2003Automatica}
for the non-switched case to the switched case,
we propose a Lyapunov-based update rule of the quantizer
under a slow-switching assumption of average dwell-time type 
\cite{Hespanha1999CDC}.

The difficulty of quantized control for switched systems
is that a mode switch changes the state trajectories and
saturates the quantizer.
In the non-switched case 
\cite{Brockett2000,Liberzon2003Automatica},
in order to avoid quantizer saturation, 
the quantizer is updated so that
the state trajectories always belong to 
certain invariant regions defined by
level sets of a Lyapunov function.
However, for switched systems, these invariant regions
are dependent on the modes. Hence 
the state may not belong to such regions after a switch.
To keep the state in the invariant regions, 
we here adjust the quantizer at every switching time, which
prevent quantizer saturation.

The same philosophy of emphasizing the importance of
quantizer updates after switching has been proposed
in \cite{Liberzon2014} for
sampled-data switched systems with quantized state feedback.
Subsequently, related works were presented 
for the output feedback case \cite{Wakaiki2014CDC} 
and for the case with bounded disturbances \cite{Yang2015ACC}.
The crucial difference lies in the fact that these works use
the quantizer based on \cite{Liberzon2003} and
investigates propagation of reachable sets 
for capturing the measurement.
This approach also aims to avoid quantizer saturation, but
it is fundamentally disparate from our Lyapunov-based approach.

This paper is organized as follows. In Section II, we 
present the main result, Theorem \ref{thm:stability_theorem},
after explaining the components 
of the closed-loop system.
Section III gives the update rule of the quantizer and 
is devoted to the proof of the convergence of the 
state to the origin.
In Section IV, we discuss Lyapunov stability.
We present a numerical example in Section V and finally conclude this paper 
in Section VI.

The present paper is based on the conference paper \cite{WakaikiMTNS2014}.
Here we extend the conference version by addressing
state jumps at switching times.
We also made structural improvements in this version.

{\em Notation:~}
Let $\lambda_{\min}(P)$ and $\lambda_{\max}(P)$ denote 
the smallest and the largest eigenvalue of $P \in \mathbb{R}^{\sf n\times n}$.
Let $M^{\top}$ denote the transpose of $M \in \mathbb{R}^{\sf m\times n}$.

The Euclidean norm of $v \in \mathbb{R}^{\sf n}$ is
denoted by $|v| = (v^*v)^{1/2}$. 
The Euclidean induced norm of $M \in \mathbb{R}^{\sf m\times n}$ is defined by
$\|M\| = \sup \{  |Mv |:~v\in \mathbb{R}^{\sf n},~|v|= 1 \}$.

For a piecewise continuous function $f:~\mathbb{R} \to \mathbb{R}$,
its left-sided limit at $t_0 \in \mathbb{R}$ is denoted by 
$f(t_0^{-}) = \lim_{t \nearrow t_0}f(t)$.

\section{Quantized output feedback stabilization of
	switched systems}

\subsection{Switched linear systems}
For a finite index set $\mathcal{P}$, let
$\sigma:[0,\infty) \to \mathcal{P}$ be a right-continuous and
piecewise constant function.
We call $\sigma$ a {\em switching signal} and
the discontinuities of $\sigma$ {\em switching times}.
Let us denote by $N_{\sigma}(t,s)$
the number of discontinuities of $\sigma$ on the interval $(s,t]$. 
Let $t_1,t_2,\dots$ be switching times, and
consider a switched linear system 
\begin{equation}
	\label{eq:SLS}
	\dot x(t) = A_{\sigma(t)}x(t)+B_{\sigma(t)}u(t),\quad 
	y(t) = C_{\sigma(t)}x(t)
\end{equation}
with the jump
\begin{equation}
	\label{eq:state_jump}
	x(t_k) = R_{\sigma(t_k),\sigma(t^-_k)} x(t^-_k)
\end{equation}
where $x(t) \in \mathbb{R}^{\sf{n}}$ is the state,
$u(t) \in \mathbb{R}^{\sf{m}}$ is the control input, and
$y(t) \in \mathbb{R}^{\sf{p}}$ is the output.

Assumptions on the switched system \eqref{eq:SLS} are as follows.
\begin{assumption}
	\label{ass:system}
	{\em
		For every $p \in \mathcal{P}$, $(A_p, B_p)$ is stabilizable and 
		$(C_p, A_p)$ is observable. 
		We choose $K_p \in \mathbb{R}^{\sf m \times n}$ and 
		$L_p \in \mathbb{R}^{\sf n \times p}$ so that
		$A_p+B_pK_p$ and $A_p+L_pC_p$ are Hurwitz.
		
		Furthermore, the switching signal $\sigma$ has an average dwell time~\cite{Hespanha1999CDC},
		i.e., there exist $\tau_a>0$ and $N_0 \geq 1$ such that
		\begin{equation}
			\label{eq:ADT_cond}
			N_{\sigma}(t,s) \leq N_0 + \frac{t-s}{\tau_a}
			\qquad (t > s \geq 0).
		\end{equation}
	}
\end{assumption}

We need observability rather than detectability, because
we reconstruct the state by using the observability Gramian.

\subsection{Quantizer}
In this paper, 
we use the following class of quantizers proposed in \cite{Liberzon2003Automatica}.

Let $\mathcal{Q}$ be a finite subset of $\mathbb{R}^{\sf{p}}$.
A quantizer is a piecewise constant function 
$q:\mathbb{R}^{\sf{p}} \to \mathcal{Q}$.
This implies geometrically that
$\mathbb{R^{\sf{p}}}$ is divided into a finite number of
the quantization regions
$\{y \in \mathbb{R^{\sf{p}}}:~q(y) = y_i \}$ $(y_i \in \mathcal{Q})$.
For the quantizer $q$, there exist positive numbers $M$ and $\Delta$ with
$M > \Delta$ such that
\begin{align}
	\label{eq:quantizer_cond1_nonSaturation}
	|y| \leq M &\quad \Rightarrow \quad |q(y) - y| \leq \Delta \\
	\label{eq:quantizer_cond2}
	|y| > M &\quad \Rightarrow \quad |q(y) | > M - \Delta.
\end{align}
The former condition \eqref{eq:quantizer_cond1_nonSaturation} gives
an upper bound of the quantization error when the quantizer 
does not saturate.
The latter \eqref{eq:quantizer_cond2} is used for 
the detection of quantizer saturation.

We place the following assumption on the 
behavior of the quantizer near the origin.
This assumption is used for Lyapunov stability of the closed-loop system. 
\begin{assumption}
	[\hspace{-0.01pt}\cite{Liberzon2003Automatica,Liberzon2007}]
	\label{ass:near origin}
	{\em
		There exists $\Delta_0 > 0$ such that $q(y)=0$
		for every $y \in \mathbb{R}^{\sf{p}}$ with $|y|\leq \Delta_0$.
	}
\end{assumption}

We use quantizers 
with the following adjustable parameter $\mu > 0$:
\begin{equation}
	\label{eq:zoom_q}
	q_{\mu}(y) = \mu q\left( \frac{y}{\mu}\right).
\end{equation}
In \eqref{eq:zoom_q}, $\mu$ is regarded as a ``zoom'' variable,
and $q_{\mu(t)}(y(t))$ is the data on $y(t)$ transmitted to the controller
at time $t$.
We need to change $\mu$ to obtain accurate information of $y$. 
The reader can refer to 
\cite{Liberzon2003Automatica, Liberzon2007, Liberzon2003Book}
for further discussions.

\begin{remark}
	The quantized output $q_{\mu}(y)$ may chatter on boundaries
	among quantization regions. Hence if we generate 
	the input $u$ by $q_{\mu}(y)$,
	the solutions of \eqref{eq:SLS} must be interpreted 
	in the sense of 
	Filippov~\cite{Filippov1988}.
	However, this generalization does not affect our Lyapunov-based analysis
	as in \cite{Brockett2000, Liberzon2003Automatica},
	because we will use a single quadratic Lyapunov function between switching times.
\end{remark}

\subsection{Controller}
Similarly to \cite{Brockett2000, Liberzon2003Automatica},
we construct the following dynamic output feedback law based on 
the standard Luenberger observers:
\begin{align}
	\dot \xi(t) &= (A_{\sigma(t)}+L_{\sigma(t)}C_{\sigma(t)})\xi(t) + 
	B_{\sigma(t)}u(t) - L_{\sigma(t)}q_{\mu(t)}(y(t)) \notag \\
	u(t) &= K_{\sigma(t)} \xi(t),
	\label{eq:controller_def}
\end{align}
where $\xi(t) \in \mathbb{R}^{\sf{n}}$ is the state estimate.
The estimate also jumps at each switching times $t_k$:
\begin{equation*}
	\xi(t_k) = R_{\sigma(t_k),\sigma(t^-_k)} \xi(t^-_k).
\end{equation*}
Then the closed-loop system is given by
\begin{align}
	\label{eq:ClosedSystem}
	\begin{array}{l}
		\dot x = A_{\sigma}x + B_{\sigma}K_{\sigma}\xi \\
		\dot \xi = (A_{\sigma}+L_{\sigma}C_{\sigma})\xi +
		B_{\sigma}K_{\sigma}\xi - L_{\sigma} q_{\mu}(y).
	\end{array}
\end{align}
If we define $z$ and $F_{\sigma}$ by
\begin{equation}
	z := 
	\begin{bmatrix}
		x \\ x - \xi
	\end{bmatrix},\quad
	F_{\sigma} := 
	\begin{bmatrix}
		A_{\sigma}+B_{\sigma}K_{\sigma} & -B_{\sigma}K_{\sigma} \\
		0 & A_{\sigma}+L_{\sigma}C_{\sigma}
	\end{bmatrix},
	\label{eq:F_def}
\end{equation}
then we rewrite \eqref{eq:ClosedSystem} in the form
\begin{equation}
	\label{eq:ClosedSystem_simple}
	\dot z = F_{\sigma} z +
	\begin{bmatrix} 
		0 \\ L_{\sigma}
	\end{bmatrix}
	(q_{\mu}(y) - y).
\end{equation}
The state $z$ of the closed-loop system \eqref{eq:ClosedSystem}
jumps at each switching time $t_k$:
\begin{equation*}
	z(t_k) = J_{\sigma(t_k),\sigma(t^-_k)} z(t^-_k),
\end{equation*}
where
\begin{equation*}
	J_{\sigma(t_k),\sigma(t^-_k)} :=
	\begin{bmatrix}
		R_{\sigma(t_k),\sigma(t^-_k)} & 0 \\
		0 & R_{\sigma(t_k),\sigma(t^-_k)}
	\end{bmatrix}.
\end{equation*}
We see from Assumption \ref{ass:system} that 
$F_p$ is Hurwitz for each $p \in \mathcal{P}$.
For every positive-definite matrix 
$Q_p \in \mathbb{R}^{2{\sf n} \rm \times 2 {\sf n}}$,
there exist a positive-definite matrix
$P_p \in \mathbb{R}^{2{\sf n} \rm \times 2 {\sf n}}$ such that
\begin{equation}
	\label{eq:Lyapunov_cont}
	F_p^{\top}P_p + P_pF_p = -Q_p
	\qquad (p \in \mathcal{P}).
\end{equation}
We define 
$\overline \lambda_P$, $\underline \lambda_P$, 
$\underline \lambda_Q$, and $C_{\max}$ by
\begin{align}
	\label{eq:alpha_def}
	\begin{array}{c}
		\overline \lambda_P := 
		{\displaystyle \max_{p \in \mathcal{P}} \lambda_{\max}(P_p)},
		\quad
		\underline \lambda_P  := 
		{\displaystyle \min_{p \in \mathcal{P}} \lambda_{\min}(P_p)} \\[8pt]
		\underline \lambda_Q  := 
		{\displaystyle \min_{p \in \mathcal{P}} \lambda_{\min}(Q_p)},
		\quad
		C_{\max} := 
		{\displaystyle \max_{p \in \mathcal{P}}\|C_p\|}.
	\end{array}
\end{align}
Fig.~\ref{fig:CSSQOF} shows the closed-loop system we consider.

\subsection{Main result}
By adjusting the ``zoom'' parameter $\mu$,
we can achieve global asymptotic stability of 
the closed-loop system \eqref{eq:ClosedSystem_simple}.
This result is a natural extension of Theorem 5
in \cite{Liberzon2003Automatica}
to switched systems.

\begin{figure}[bt]
	\centering
	\includegraphics[width = 9cm,clip]
	{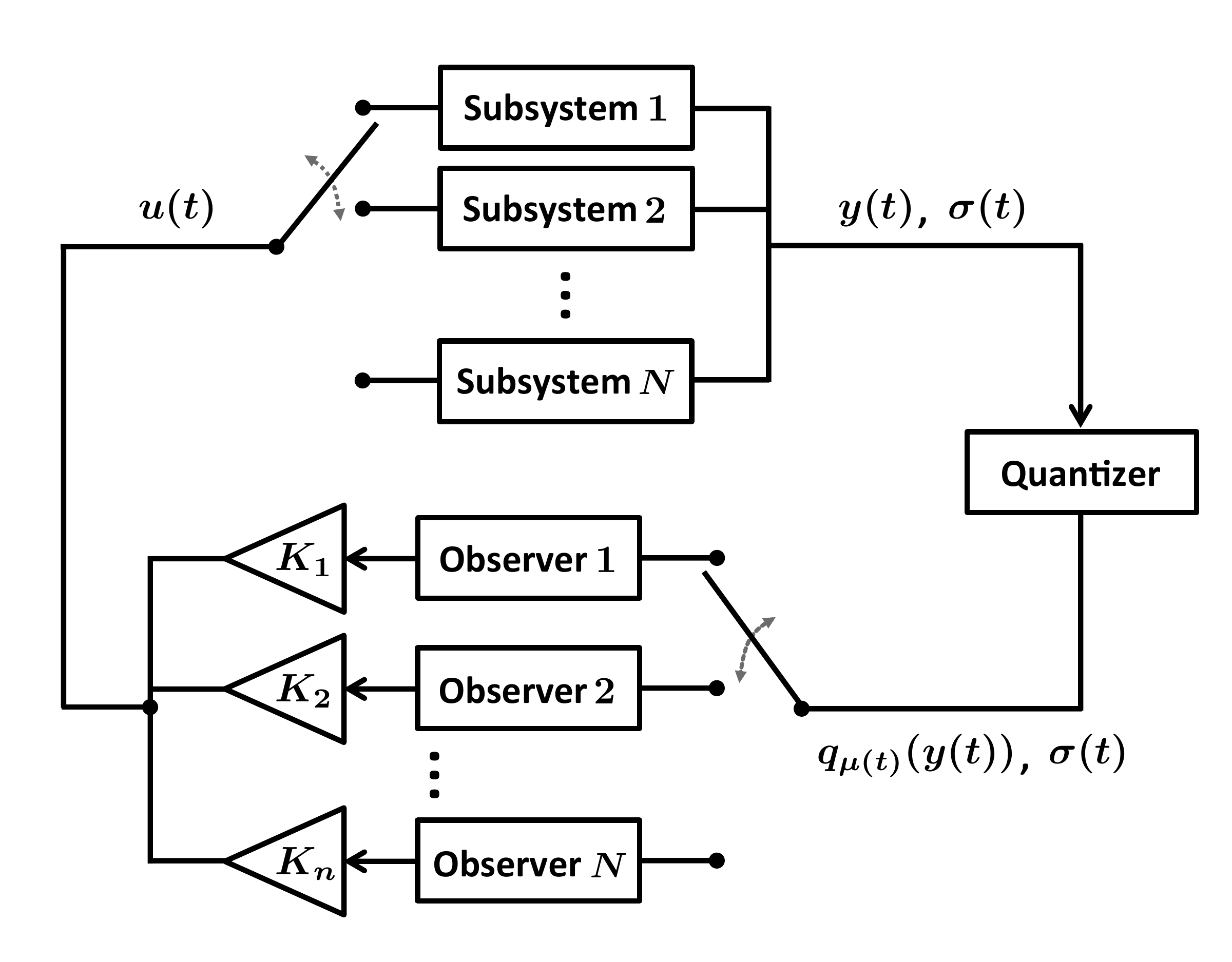}
	\caption{Continuous-time switched system with quantized output feedback.}
	\label{fig:CSSQOF}
\end{figure}

\begin{theorem}
	{\em
		\label{thm:stability_theorem}
		Define $\Theta$ by
		\begin{equation}
			\label{eq:Theta_def}
			\Theta := \frac{2 \max_{p\in \mathcal{P}} \|P_p\hat L_p\|}{\underline \lambda_Q},
			\text{~~where~~}
			\hat L_p := 
			\begin{bmatrix} 
				0 \\ L_{p}
			\end{bmatrix}.
		\end{equation}
		and let $M$ be large enough to satisfy
		\begin{equation}
			\label{eq:M_Delta_cond_main_thm}
			M > \max
			\left\{ 2\Delta,\quad
			\sqrt{\frac{\overline \lambda_P}{\underline \lambda_P}}
			\Theta \Delta C_{\max}
			\right\}.
		\end{equation}
		If the average dwell time $\tau_a$ in
		\eqref{eq:ADT_cond} 
		is larger than a certain value, then
		there exists a right-continuous and piecewise-constant function $\mu$ such that
		the closed-loop system \eqref{eq:ClosedSystem_simple} has
		the following two properties 
		for every $x(0) \in \mathbb{R}^{\sf{n}}$ and every $\sigma(0) \in \mathcal{P}$:
		
		{\sl (i)~Convergence to the origin:~}
		$\lim_{t \to \infty}z(t) = 0$.
		
		{\sl (ii)~Lyapunov stability:~}
		To every $\varepsilon > 0$, there corresponds $\delta > 0$ such that
		\begin{equation*}
			|x(0)| < \delta \quad \Rightarrow \quad
			|z(t)| < \varepsilon~~~(t\geq 0).
		\end{equation*}
	}
\end{theorem}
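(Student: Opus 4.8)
The plan is to fuse the invariant-ellipsoid ``zooming'' technique of \cite{Brockett2000,Liberzon2003Automatica} with the multiple-Lyapunov-function and average-dwell-time machinery of \cite{Hespanha1999CDC}, working throughout with the mode-dependent quadratics $V_p(z) := z^\top P_p z$ supplied by \eqref{eq:Lyapunov_cont}. First I would carry out the per-mode Lyapunov computation: differentiating $V_p$ along \eqref{eq:ClosedSystem_simple} on an interval where $\sigma \equiv p$ and substituting \eqref{eq:Lyapunov_cont} gives $\dot V_p = -z^\top Q_p z + 2 z^\top P_p \hat L_p (q_\mu(y)-y)$. When the quantizer does not saturate, i.e. when $C_{\max}|z| \le M\mu$ so that $|q_\mu(y)-y| \le \mu\Delta$ by \eqref{eq:quantizer_cond1_nonSaturation} and \eqref{eq:zoom_q}, the definition \eqref{eq:Theta_def} of $\Theta$ yields
\[
\dot V_p \;\le\; -\underline\lambda_Q\,|z|\,\bigl(|z|-\Theta\mu\Delta\bigr).
\]
Hence $V_p$ strictly decreases while $|z| > \Theta\mu\Delta$, so the sublevel sets $\{z : V_p(z)\le R\}$ are invariant and the state contracts exponentially (at a rate of order $\underline\lambda_Q/\overline\lambda_P$) into the ball $\{|z|\le\Theta\mu\Delta\}$.

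Next I would verify that \eqref{eq:M_Delta_cond_main_thm} makes this contraction consistent with non-saturation. The smallest $P_p$-ellipsoid enclosing $\{|z|\le\Theta\mu\Delta\}$ has level $\overline\lambda_P(\Theta\mu\Delta)^2$, on which $|z|\le\sqrt{\overline\lambda_P/\underline\lambda_P}\,\Theta\mu\Delta$; the bound $M > \sqrt{\overline\lambda_P/\underline\lambda_P}\,\Theta\Delta C_{\max}$ is precisely what places this ellipsoid inside the non-saturation region $\{C_{\max}|z|\le M\mu\}$. Consequently, for each $\mu$ I can choose an ellipsoidal region $E_p(\mu)$ with level $\propto\mu^2$ that simultaneously contains the inner ball and lies in the non-saturation region, so that once $z$ enters $E_p(\mu)$ it stays there and the quantizer never saturates during that mode. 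A preliminary zoom-out phase with $u=0$ and $\mu$ increased fast enough to dominate the worst-case open-loop growth over one dwell-time window, with saturation detected through \eqref{eq:quantizer_cond2}, captures the initial state into $E_{\sigma(0)}(\mu)$.

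The crux, and the main obstacle, is the zoom-in across switches. Within each inter-switch interval I would hold $\mu$ constant and let the state contract as above; at each switching time $t_k$ I would reset $\mu$ to the smallest value placing the jumped state $J_{\sigma(t_k),\sigma(t_k^-)}z(t_k^-)$ back inside the new mode's region $E_{\sigma(t_k)}(\mu(t_k))$. The difficulty is that a switch does two harmful things at once — the state jumps and the metric changes from $P_{\sigma(t_k^-)}$ to $P_{\sigma(t_k)}$ — either of which can eject $z$ and saturate the quantizer; the reset must therefore absorb both, and may be forced to increase $\mu$ by a factor $\Omega$ quantified through $\overline\lambda_P/\underline\lambda_P$ and $\max\|J_{p,q}\|$, decreasing $\mu$ only when the preceding contraction was strong enough. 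Since $\mu$ can thus rise at switches yet must still tend to $0$, the scheme works only when the intra-mode contraction beats the per-switch inflation on average. Combining the exponential decay of $V_\sigma$ between switches with the factor $\Omega$ per switch and the bound $N_\sigma(t,s)\le N_0+(t-s)/\tau_a$ from \eqref{eq:ADT_cond} shows that, once $\tau_a$ exceeds a threshold of order $\ln\Omega$ divided by the intra-mode decay rate, the accumulated product of contractions and inflations decays, forcing $\mu(t)\to 0$ and hence $z(t)\to 0$; this is property~(i). The quantitative reconciliation of these competing updates with guaranteed non-saturation during the post-switch transient is exactly where the lower bound on $\tau_a$ is born, and is the step I expect to be most delicate. (Should switching cease after finitely many times, the terminal mode reduces to the non-switched geometric zoom-in of \cite{Liberzon2003Automatica}.)

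Finally, for Lyapunov stability~(ii) I would exploit the $\mu$-scaling of every region together with Assumption \ref{ass:near origin}: for small $|x(0)|$ the zoom-out terminates at a small $\mu$, all ellipsoids $E_p(\mu)$ remain small, and $|z(t)|$ is bounded uniformly by $\varepsilon$. Moreover, once $z$ is small enough, Assumption \ref{ass:near origin} forces $q_\mu(y)=0$, so the closed loop reduces to the nominal switched system $\dot z = F_\sigma z$, which is stable under the same average-dwell-time condition and injects no further excitation, giving the desired $\delta$.
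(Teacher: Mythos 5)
Your zoom-in machinery is essentially the paper's: the differential inequality $\dot V_p \le -\underline\lambda_Q|z|(|z|-\Theta\mu\Delta)$, the two nested level sets scaled by $\mu$, the inflation of $\mu$ at each switch by a factor determined by the jump map and the change of metric (the paper's constants $c_{p_2,p_1}$ from \eqref{eq:Lyapunov_cond}), and the trade-off between per-period contraction and per-switch inflation under \eqref{eq:ADT_cond}, which yields the threshold \eqref{eq:adt_cond}. Two points, however, are genuine gaps rather than omitted routine detail.

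First, the capture step. You write that a zoom-out with $u=0$ and fast-growing $\mu$ ``captures the initial state into $E_{\sigma(0)}(\mu)$,'' but in the output-feedback setting $z=(x,\,x-\xi)$ contains the estimation error, and bounding $|y|\le M\mu$ says nothing about $|x-\xi|$. You must actually initialize $\xi$ from the quantized output history; the paper does this by inverting the observability Gramian $W_p(\tau)$ over an interval of length $\tau$ on which no switch occurs (Theorem \ref{lem:capture_non_switched}, formula \eqref{eq:xi_t0_tau}), and the existence of such a switch-free window under the average dwell-time assumption is itself a nontrivial combinatorial fact (Lemma \ref{lem:ADT_upperbound}). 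This is exactly why Assumption \ref{ass:system} demands observability rather than detectability; your sketch never uses it. Second, the Lyapunov-stability step is wrong as stated. Assumption \ref{ass:near origin} gives $q_\mu(y)=0$, not $q_\mu(y)=y$, so the perturbation term in \eqref{eq:ClosedSystem_simple} becomes $-\hat L_\sigma y\neq 0$; with $\xi\equiv 0$ the plant then runs open loop, $\dot x=A_\sigma x$, which may be unstable. Moreover $\mu(t_0)$ after zoom-out is bounded \emph{below} by a constant independent of $|x(0)|$ (cf.\ \eqref{eq:bar_mu_ineq}), so the ellipsoids at the start of zoom-in do not shrink with $\delta$. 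The correct argument (Section IV) fixes an initial-condition-independent time $t_\varepsilon=t_0+\bar mT$ by which the contracting ellipsoids lie in $\mathscr{B}_\varepsilon$, and then chooses $\delta$ small enough that the open-loop growth over the bounded horizon $[0,t_\varepsilon]$ both keeps $|z(t)|<\varepsilon$ and keeps $y$ inside the quantizer's dead zone so that $u\equiv 0$ persists. Without this finite-horizon bookkeeping your claim that the reduced dynamics ``inject no further excitation'' does not go through.
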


We shall prove convergence to the origin and 
Lyapunov stability in Sections III and IV, respectively.
We also present an update rule of the ``zoom'' parameter $\mu$
in Section 3.
The sufficient condition on $\tau_a$ is given by 
\eqref{eq:adt_cond} in Theorem \ref{lem:convergence_to_origin} below.

\section{The proof of convergence to the origin}
Define $\Gamma$ and $\Lambda$ by
\[
\Gamma := \max_{p \in \mathcal{P}}\|A_p\|,\quad
\Lambda := \max
\left\{
1,~\max_{p,q \in \mathcal{P},p\not=q }\|R_{p,q}\|
\right\}.
\]
We split the proof into two stages:
the ``zooming-out'' and ``zooming-in'' stages.

\subsection{Capturing the state of the closed-loop system 
	by ``zooming out''} 
Since the initial state $x(0)$ is unknown to the quantizer,
we have to capture 
the state $z$ of the closed-loop
system
by ``zooming out'', i.e., increasing the ``zoom'' parameter $\mu$.
We first see that $z$ can be captured if
we have a time-interval with a given length that has
no switches.
\begin{theorem}
	\label{lem:capture_non_switched}
	{\em
		Consider the closed-loop system \eqref{eq:ClosedSystem_simple}.
		Set the control input $u = 0$.
		Choose $\tau > 0$, and define
		$\Upsilon_p(\tau):= \max_{0\leq t \leq \tau} \left\| C_pe^{A_pt} 
		\right\|$ and the observability Gramian
		\[
		W_p(\tau) :=
		\int^{\tau}_0 e^{A_p^{\top}t}C_p^{\top} C_pe^{A_pt} dt.
		\]
		Assume that there exists $s_0 \geq 0$ such that
		we can observe
		\begin{gather}
			|q_{\mu(t)}(y(t))| \leq 
			M \mu(t) - \Delta \mu(t) \label{eq:zooming_out_QY} \\
			\sigma(t) = \sigma(s_0) =: p \label{eq:zooming_out_SS}
		\end{gather}
		for all $t \in [s_0, s_0 + \tau)$.
		Let the ``zoom'' parameter $\mu$ be
		piecewise continuous and monotone increasing
		in $[0,s_0+\tau)$.
		If we set the state estimate $\xi$ at $t = s_0+\tau$ by
		\begin{equation}
			\label{eq:xi_t0_tau}
			\xi(s_0+\tau) 
			:=
			e^{A_p \tau}
			\left(
			W_p(\tau)^{-1}
			\int^{\tau}_{0}  e^{A_p^{\top}t}C_p^{\top} q_{\mu(s_0+t)}(y(s_0+t))dt
			\right) 
		\end{equation}
		and if we choose $\mu(s_0+\tau)$ so that
		\begin{align}
			\mu(s_0+\tau) \geq
			\sqrt{\frac{\overline \lambda_p}{\underline \lambda_p}}
			\frac{C_{\max}}{M}
			\biggl(
			|\xi(s_0+\tau)| 
			+
			2\|W_p(\tau)^{-1}\|\tau \Upsilon_p(\tau) \left\| e^{A_p\tau} \right\| 
			\Delta \mu((s_0+\tau)^-)
			\biggr),
			\label{eq:mu_s_0_tau}
		\end{align}
		then $z(s_0 + \tau)\in 
		\mathscr{R}_1(\mu(s_0 + \tau),\sigma(s_0 + \tau))$.
	}
\end{theorem}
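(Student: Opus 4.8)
The plan is to reconstruct the initial state $x(s_0)$ from the quantized output measurements using the observability Gramian, bound the reconstruction error, and then use that bound to show that the state $z(s_0+\tau)$ lands inside the prescribed level-set region $\mathscr{R}_1(\mu(s_0+\tau),\sigma(s_0+\tau))$ once $\mu(s_0+\tau)$ is chosen large enough as in \eqref{eq:mu_s_0_tau}. Since $u=0$ on the observation window and no switch occurs by \eqref{eq:zooming_out_SS}, the plant evolves as $x(t)=e^{A_p(t-s_0)}x(s_0)$ and $y(s_0+t)=C_pe^{A_pt}x(s_0)$. The exact state would be recovered via $x(s_0)=W_p(\tau)^{-1}\int_0^{\tau}e^{A_p^\top t}C_p^\top y(s_0+t)\,dt$, so formula \eqref{eq:xi_t0_tau} is simply the propagated version $e^{A_p\tau}$ times this integral, but with the true output $y$ replaced by the quantized output $q_{\mu(s_0+t)}(y(s_0+t))$.

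The first key step is to verify that the quantizer does not saturate on the window, so that the error bound \eqref{eq:quantizer_cond1_nonSaturation} applies pointwise. The hypothesis \eqref{eq:zooming_out_QY} together with the saturation-detection property \eqref{eq:quantizer_cond2} guarantees exactly this: if $|y(s_0+t)/\mu(s_0+t)|>M$ held, then \eqref{eq:quantizer_cond2} would force $|q(y/\mu)|>M-\Delta$, i.e.\ $|q_{\mu}(y)|>(M-\Delta)\mu$, contradicting \eqref{eq:zooming_out_QY}. Hence $|y(s_0+t)|\leq M\mu(s_0+t)$ and, by \eqref{eq:quantizer_cond1_nonSaturation}, $|q_{\mu(s_0+t)}(y(s_0+t))-y(s_0+t)|\leq \Delta\mu(s_0+t)$. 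Using monotonicity of $\mu$, I would bound $\mu(s_0+t)\leq\mu((s_0+\tau)^-)$ on the window, giving a uniform per-sample error of at most $\Delta\mu((s_0+\tau)^-)$.

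The second step is the error propagation. Subtracting the true value from \eqref{eq:xi_t0_tau}, the estimation error at time $s_0+\tau$ is
\begin{equation*}
\xi(s_0+\tau)-x(s_0+\tau)=e^{A_p\tau}W_p(\tau)^{-1}\int_0^\tau e^{A_p^\top t}C_p^\top\bigl(q_{\mu(s_0+t)}(y(s_0+t))-y(s_0+t)\bigr)\,dt.
\end{equation*}
Taking norms, pulling $\|e^{A_p\tau}\|$ and $\|W_p(\tau)^{-1}\|$ out, bounding $\|e^{A_p^\top t}C_p^\top\|=\|C_pe^{A_pt}\|\leq\Upsilon_p(\tau)$, and using the error bound with the length-$\tau$ integration interval yields $|\xi(s_0+\tau)-x(s_0+\tau)|\leq\|W_p(\tau)^{-1}\|\,\tau\,\Upsilon_p(\tau)\,\|e^{A_p\tau}\|\,\Delta\mu((s_0+\tau)^-)$, which is precisely half of the second summand in \eqref{eq:mu_s_0_tau}.

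The final step is to translate these bounds on $|\xi(s_0+\tau)|$ and on the estimation error into membership in $\mathscr{R}_1$. Recalling $z=[x^\top,(x-\xi)^\top]^\top$, I would bound $|x(s_0+\tau)|\leq|\xi(s_0+\tau)|+|x(s_0+\tau)-\xi(s_0+\tau)|$ and note that $|x-\xi|$ is itself the error just bounded, so that $|z(s_0+\tau)|$ is controlled by $|\xi(s_0+\tau)|+2\|W_p(\tau)^{-1}\|\tau\Upsilon_p(\tau)\|e^{A_p\tau}\|\Delta\mu((s_0+\tau)^-)$ — the exact parenthesized quantity in \eqref{eq:mu_s_0_tau}. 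Since $\mathscr{R}_1(\mu,p)$ should be the sublevel set $\{z:z^\top P_pz\leq\mu^2 M^2/C_{\max}^2\}$ (or equivalent), membership follows by sandwiching $z^\top P_pz$ between $\underline\lambda_p$ and $\overline\lambda_p$ times $|z|^2$ and comparing with the chosen $\mu(s_0+\tau)$; the factor $\sqrt{\overline\lambda_p/\underline\lambda_p}\,C_{\max}/M$ in \eqref{eq:mu_s_0_tau} is exactly what makes this comparison tight. The main obstacle I anticipate is not any single inequality but keeping the bookkeeping clean: correctly matching the definition of $\mathscr{R}_1$ (which is not displayed in this excerpt) to the right-hand side of \eqref{eq:mu_s_0_tau}, and carefully using the monotonicity and right-continuity of $\mu$ so that the per-sample error bound $\Delta\mu((s_0+\tau)^-)$ is valid uniformly over the half-open window $[s_0,s_0+\tau)$.
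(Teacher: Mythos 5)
Your proposal is correct and follows exactly the route the paper intends: the paper itself omits this proof, deferring to Theorem 5 of \cite{Liberzon2003Automatica} and the conference version \cite{WakaikiMTNS2014}, and that argument is precisely your three steps --- deduce non-saturation from \eqref{eq:zooming_out_QY} via \eqref{eq:quantizer_cond2}, propagate the per-sample error $\Delta\mu((s_0+\tau)^-)$ through the Gramian inversion to bound $|\xi(s_0+\tau)-x(s_0+\tau)|$, and then use $|z(s_0+\tau)|\leq|\xi(s_0+\tau)|+2|x(s_0+\tau)-\xi(s_0+\tau)|$ together with the eigenvalue sandwich of $V_p$ to land in $\mathscr{R}_1(\mu(s_0+\tau),\sigma(s_0+\tau))$ as defined in \eqref{eq:R1_def}. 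The only cosmetic discrepancy is that \eqref{eq:R1_def} carries the factor $\underline\lambda_P M^2/C_{\max}^2$ rather than $M^2/C_{\max}^2$, which your hedged statement of the level set omits but your comparison with the factor $\sqrt{\overline\lambda_p/\underline\lambda_p}\,C_{\max}/M$ in \eqref{eq:mu_s_0_tau} already accounts for.
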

\begin{proof}
	Since no switch occurs by \eqref{eq:zooming_out_SS},
	we can easily obtain this result by extending Theorem 5 in 
	\cite{Liberzon2003Automatica} for the non-switched case. 
	We therefore omit the proof;
	see also the conference version \cite{WakaikiMTNS2014}.
\end{proof}

It follows from Theorem \ref{lem:capture_non_switched} that
in order to capture the state $z$,
it is enough
to show the existence of $s_0 \geq 0$ satisfying 
\eqref{eq:zooming_out_QY} and \eqref{eq:zooming_out_SS} for all
$t \in [s_0,s_0 + \tau)$.
To this end, we use the following lemma on 
average dwell time $\tau_a$:
\begin{lemma}
	\label{lem:ADT_upperbound}
	{\em
		Fix an initial time $t_0 \geq 0$.
		Suppose that $\sigma$ satisfies 
		the average dwell-time assumption \eqref{eq:ADT_cond}.
		Let $\tau \in (0, \tau_a)$. If we choose $N\in \mathbb{N}$ so that
		\begin{equation}
			\label{eq:N_ADTcond}
			N > \frac{\tau_a}{\tau_a - \tau} \left( N_0 - \frac{\tau}{\tau_a} \right),
		\end{equation}
		then there exists  $\upsilon \in [0, (N-1)\tau]$ such that 
		$N_{\sigma}(t_0+\upsilon+\tau,t_0+\upsilon) = 0$.
	}
\end{lemma}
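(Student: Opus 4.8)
The plan is to argue by contradiction through a greedy chaining of switching times. First I would record that, after clearing the positive denominator $\tau_a - \tau$, the hypothesis \eqref{eq:N_ADTcond} is equivalent to the cleaner inequality
\begin{equation*}
N > N_0 + \frac{(N-1)\tau}{\tau_a},
\end{equation*}
which is the form I will contradict. The exponent of usefulness here is the factor $(N-1)$ rather than $N$: it signals that the $N$ switches I am about to force must be packed into a window of length $(N-1)\tau$, not $N\tau$.

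I would then suppose, for contradiction, that $N_{\sigma}(t_0+\upsilon+\tau, t_0+\upsilon) \geq 1$ for every $\upsilon \in [0, (N-1)\tau]$; that is, every window of length $\tau$ whose left endpoint lies in $[t_0, t_0+(N-1)\tau]$ contains at least one switching time. Taking $\upsilon = 0$ produces a first switch $\theta_1 \in (t_0, t_0+\tau]$. Inductively, once a switch $\theta_k$ with $\theta_k \leq t_0 + k\tau$ has been found and $k \leq N-1$, the choice $\upsilon = \theta_k - t_0 \in [0,(N-1)\tau]$ is admissible, so the window $(\theta_k, \theta_k + \tau]$ again contains a switch $\theta_{k+1}$ with $\theta_{k+1} \leq \theta_k + \tau \leq t_0 + (k+1)\tau$. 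Carrying this through $k = 1, \dots, N-1$ yields $N$ distinct switching times $\theta_1 < \dots < \theta_N$ whose consecutive gaps are each at most $\tau$.

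The decisive step is to exploit these gaps to confine all $N$ switches to a window of length only $(N-1)\tau$: summing the gaps gives $\theta_N - \theta_1 \leq (N-1)\tau$. Choosing $s = \theta_1 - \varepsilon$ with $\varepsilon > 0$ small enough that $s > t_0 \geq 0$ and no switch lies in $(\theta_1 - \varepsilon, \theta_1)$, the interval $(s, \theta_N]$ contains the switches $\theta_1, \dots, \theta_N$, so the average dwell-time bound \eqref{eq:ADT_cond} gives $N \leq N_{\sigma}(\theta_N, s) \leq N_0 + (\theta_N - \theta_1 + \varepsilon)/\tau_a \leq N_0 + ((N-1)\tau + \varepsilon)/\tau_a$. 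Letting $\varepsilon \to 0^{+}$ yields $N \leq N_0 + (N-1)\tau/\tau_a$, contradicting the reformulated hypothesis and completing the proof.

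I expect the main obstacle to be obtaining the sharp constant rather than the bare existence of a switch-free window. The naive pigeonhole — partition $(t_0, t_0+N\tau]$ into $N$ disjoint cells of length $\tau$ and note that if each held a switch then $N \leq N_{\sigma}(t_0+N\tau, t_0) \leq N_0 + N\tau/\tau_a$ — only excludes the failure of the conclusion when $N > N_0\tau_a/(\tau_a-\tau)$, a strictly stronger demand on $N$ than \eqref{eq:N_ADTcond}, because it disperses the forced switches over a window of length $N\tau$. The sliding construction above is precisely what compresses them into length $(N-1)\tau$ and matches the stated threshold; the $\varepsilon$-limit is the minor technical care needed to convert the half-open counting in $N_{\sigma}$ into the tight inequality.
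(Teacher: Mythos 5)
Your proposal is correct and follows essentially the same route as the paper: assume every length-$\tau$ window with left endpoint in $[t_0,t_0+(N-1)\tau]$ contains a switch, deduce that the first $N$ switches have consecutive gaps at most $\tau$ and hence span at most $(N-1)\tau$, apply \eqref{eq:ADT_cond} on an interval extended by $\varepsilon$ to the left of the first switch, and let $\varepsilon\to 0^+$ to contradict \eqref{eq:N_ADTcond}. The only cosmetic differences are that you build the chain by forward induction where the paper argues via a minimal violating index, and that you work with the algebraically equivalent form $N>N_0+(N-1)\tau/\tau_a$.
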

\begin{proof}
	Let us denote the switching times by $t_1,t_2,\dots$, and
	fix $N\in \mathbb{N}$.
	Suppose that
	\begin{equation}
		\label{eq:N_sigma>0}
		N_{\sigma}(t_0+\upsilon+\tau,t_0+\upsilon) > 0
	\end{equation}
	for all $\upsilon \in [0, (N-1)\tau]$. 
	Then
	we have 
	\begin{equation}
		\label{eq:t_k_tau_bound}
		t_k - t_{k-1} \leq \tau \qquad (k=1,\dots,N).
	\end{equation}
	Indeed, if 
	$t_k - t_{k-1} > \tau
	$
	for some $k \leq N$ and 
	if we let $\bar k$ be the smallest such integer,
	then we obtain 
	\[
	t_{\bar k -1} - t_0 \leq (\bar k -1)\tau
	\leq (N-1)\tau
	\] 
	and
	$N_{\sigma}(t_{\bar k-1}+\tau , t_{\bar k-1}) = 0$. This contradicts
	\eqref{eq:N_sigma>0}
	with $\upsilon = t_{\bar k-1} - t_0 \in [0,(N-1)\tau]$.
	Thus we have \eqref{eq:t_k_tau_bound}.
	
	From \eqref{eq:t_k_tau_bound}, we see that
	for $0 < \epsilon < t_1$,
	\begin{equation*}
		t_N- (t_1 - \epsilon) = \sum_{k=2}^N (t_k -t_{k-1}) + \epsilon
		\leq (N-1)\tau + \epsilon
	\end{equation*}
	It follows from \eqref{eq:ADT_cond} that
	\begin{align*}
		N = N_{\sigma}(t_N,t_1 - \epsilon) 
		&\leq N_0 + \frac{(N-1)\tau + \epsilon}{\tau_a}.
	\end{align*}
	Therefore $N$ satisfies the following inequality:
	\begin{equation}
		\label{eq:n_upperbound}
		N \leq \frac{\tau_a}{\tau_a - \tau} 
		\left(
		N_0 - \frac{\tau - \epsilon}{\tau_a}
		\right).
	\end{equation}
	Since $\epsilon \in (0, t_1)$ was arbitrary, 
	\eqref{eq:n_upperbound} is equivalent to
	\begin{equation}
		\label{eq:n_upperbound2}
		N \leq \frac{\tau_a}{\tau_a - \tau} 
		\left(
		N_0 - \frac{\tau }{\tau_a}
		\right).
	\end{equation}
	Thus we have shown that
	if \eqref{eq:N_sigma>0} holds for all $\upsilon \in [0, (N-1)\tau]$,
	then $N \in \mathbb{N}$ satisfies \eqref{eq:n_upperbound2}.
	The contraposition of this statement gives a desired result. 
\end{proof}

\begin{theorem}
	\label{thm:non_switch_interval}
	{\em
		Consider the closed-loop system \eqref{eq:ClosedSystem_simple}
		with average dwell-time property \eqref{eq:ADT_cond}.
		Set the control input $u = 0$. Fix $\chi > 0$, $\bar \tau > 0$, 
		and $\tau \in (0, \tau_a)$.
		Increase $\mu$ in the following way:
		$\mu(t)=1$ for $t \in [0,\bar \tau)$, 
		\begin{equation}
			\label{eq:mu_update_zoom_in}
			\mu(t)= \Lambda^{
				N_0
			} \cdot
			\left(
			\Lambda^{1/\tau_a} e^{\Gamma}
			\right)^{(1+\chi)k \bar \tau}
		\end{equation}
		for $t \in [k \bar \tau,(k+1)\bar \tau)$ and $k \in \mathbb{N}$.
		Then there exists
		$s_0 \geq 0$ such that
		\eqref{eq:zooming_out_QY} and \eqref{eq:zooming_out_SS} hold for all
		$t \in [s_0,s_0 + \tau)$.
	}
\end{theorem}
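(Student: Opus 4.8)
The plan is to locate a time window late enough that the ``zoom'' parameter has outgrown the state (so the quantizer cannot saturate), and then to extract from that window a subinterval of length $\tau$ that is free of switches, using Lemma~\ref{lem:ADT_upperbound}. The two conditions \eqref{eq:zooming_out_QY} and \eqref{eq:zooming_out_SS} are then satisfied on $[s_0,s_0+\tau)$.

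First I would bound the plant state. With $u=0$ the plant obeys $\dot x=A_{\sigma}x$ between switches and $x(t_k)=R_{\sigma(t_k),\sigma(t_k^-)}x(t_k^-)$ at each switching time, so $|x(t)|\le \Lambda^{N_{\sigma}(t,0)}e^{\Gamma t}|x(0)|$. Inserting the average dwell-time bound \eqref{eq:ADT_cond} and writing $\rho:=\Lambda^{1/\tau_a}e^{\Gamma}$, this yields $|x(t)|\le \Lambda^{N_0}\rho^{\,t}|x(0)|$, hence $|y(t)|\le C_{\max}\Lambda^{N_0}\rho^{\,t}|x(0)|$. The exponent $\rho$ is exactly the per-unit-time growth that the update rule \eqref{eq:mu_update_zoom_in} inflates by the surplus factor $1+\chi$.

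The crux is the growth comparison. On each interval $t\in[k\bar\tau,(k+1)\bar\tau)$ the parameter $\mu$ is constant and equal to $\Lambda^{N_0}\rho^{\,(1+\chi)k\bar\tau}$, whereas $|y(t)|$ keeps growing; bounding $|y|$ by its value at the right endpoint gives
\[
\frac{|y(t)|}{\mu(t)}\le C_{\max}|x(0)|\,\rho^{\,\bar\tau(1-\chi k)}.
\]
Since $\rho>1$, the right-hand side tends to $0$ as $k\to\infty$, so there is a $k^{\ast}$ with $|y(t)|\le(M-2\Delta)\mu(t)$ for all $t\ge k^{\ast}\bar\tau$, where $M-2\Delta>0$ by \eqref{eq:M_Delta_cond_main_thm}. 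Applying the quantizer axiom \eqref{eq:quantizer_cond1_nonSaturation} to the argument $y/\mu$ then gives $|q(y/\mu)|\le|y/\mu|+\Delta\le M-\Delta$, i.e. $|q_{\mu(t)}(y(t))|\le(M-\Delta)\mu(t)$, which is precisely \eqref{eq:zooming_out_QY}, for every $t\ge k^{\ast}\bar\tau$.

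Finally I would produce the switch-free window. Choosing $N\in\mathbb{N}$ to satisfy \eqref{eq:N_ADTcond} and applying Lemma~\ref{lem:ADT_upperbound} with initial time $t_0=k^{\ast}\bar\tau$ yields $\upsilon\in[0,(N-1)\tau]$ with $N_{\sigma}(t_0+\upsilon+\tau,t_0+\upsilon)=0$. Setting $s_0:=k^{\ast}\bar\tau+\upsilon\ge k^{\ast}\bar\tau$, right-continuity of $\sigma$ makes it constant on $[s_0,s_0+\tau)$, so \eqref{eq:zooming_out_SS} holds; and since $[s_0,s_0+\tau)\subseteq[k^{\ast}\bar\tau,\infty)$, \eqref{eq:zooming_out_QY} holds there too. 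The main obstacle is the comparison estimate: it is what forces the $1+\chi$ surplus in \eqref{eq:mu_update_zoom_in}, and it needs care because $\mu$ is piecewise constant while $|y|$ varies continuously, so the worst case must be taken at the right end of each $\mu$-interval. The boundary case $\rho=1$ (all $A_p=0$ and $\|R_{p,q}\|\le1$) is degenerate and would be treated separately, since then $\mu$ does not grow.
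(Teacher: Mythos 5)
Your proof is correct and follows essentially the same route as the paper's: bound the state via the average dwell-time inequality, conclude that $\mu$ under the update rule \eqref{eq:mu_update_zoom_in} eventually dominates $|y|$ so that \eqref{eq:zooming_out_QY} holds from some time onward, and then invoke Lemma~\ref{lem:ADT_upperbound} to extract a switch-free subinterval of length $\tau$ on which \eqref{eq:zooming_out_SS} holds. Your explicit exponent computation $\bar\tau(1-\chi k)$ simply fills in the step the paper dispatches with ``the growth rate of $\mu(t)$ is larger than that of $|y(t)|$,'' and your caveat about the degenerate case $\Lambda^{1/\tau_a}e^{\Gamma}=1$ is a fair observation that the paper's own proof also glosses over.
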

\begin{proof}
	If $n $ switches occur in the interval $(0,t]$, then we have
	\begin{equation*}
		| x(t) |
		\leq 
		\left(
		\prod_{k=1}^n \Lambda 
		\right) \cdot
		e^{\Gamma t} \cdot |x(0)|.
	\end{equation*}
	Since $\Lambda \geq 1$, it follows from \eqref{eq:ADT_cond} that
	\begin{equation}
		\label{eq:state_bound}
		| x(t) |
		\leq 
		\Lambda^{\left(
			N_0 +
			\frac{t}{\tau_a}
			\right)} \cdot
		e^{\Gamma t} \cdot  |x(0)|.
	\end{equation}
	Clearly, this inequality holds in the case when
	no switches occur.
	Since \eqref{eq:M_Delta_cond_main_thm} shows that $M - 2\Delta > 0$ 
	and since the growth rate of $\mu(t)$ is larger than 
	that of $|y(t)|$,
	there exists $s_0' \geq 0$ such that
	\begin{equation}
		\label{eq:y_M}
		\left|
		y(t)
		\right| \leq M\mu(t) -2\Delta \mu(t)
		\qquad (t \geq s_0').
	\end{equation}
	In conjunction with \eqref{eq:quantizer_cond1_nonSaturation},
	this implies that \eqref{eq:zooming_out_QY} holds for 
	every $t \geq s_0'$.
	Let $N$ be an integer satisfying \eqref{eq:N_ADTcond}.
	Then Lemma \ref{lem:ADT_upperbound} guarantees the
	existence of $s_0 \in [s_0', s_0' + (N-1)\tau]$ such that 
	\eqref{eq:zooming_out_SS} holds for every $t \in [s_0, s_0 + \tau)$.
	This completes the proof.
\end{proof}

It follows from Theorems \ref{lem:capture_non_switched} and 
\ref{thm:non_switch_interval} that
if we update the ``zoom'' 
parameter $\mu$ as in \eqref{eq:mu_update_zoom_in}
and if we set the state estimate $\xi$ by \eqref{eq:xi_t0_tau},
then the state $z$ of the closed-loop system
can be captured.

\begin{remark}
	\label{rem:brief_remark_for_Lyap}
	If the initial state $x(0)$ is sufficiently small, then
	$s_0'$ in \eqref{eq:y_M} is zero. In this situation, we can capture 
	$z$ by $t = N\tau$ for all switching signal with average
	dwell-time property \eqref{eq:ADT_cond}. We use this fact 
	for the proof of Lyapunov stability; see Section 4.
\end{remark}

\subsection{Measuring the output by ``zooming in''}
Next we drive the state $z$ of the closed-loop system
to the origin by ``zooming-in'', i.e., 
decreasing the ``zoom'' parameter $\mu$.
Since
$\mu$ increases
at each switching time during this stage,
the term ``zooming-in stage'' may be misleading.
However, $\mu$ decreases overall under a certain
average dwell-time assumption \eqref{eq:ADT_cond}, 
so we use the term ``zooming-in'' 
as in \cite{
	Brockett2000,
	Liberzon2003Automatica}.

Let us first consider a fixed ``zoom'' parameter $\mu$.
The following lemma shows that
if no switches occur, then the state trajectories move
from a large level set to a small level set
of the Lyapunov function $V_p(z) := z^{\top}P_pz$
in a finite time that is independent of the mode $p$:
\begin{lemma}
	\label{lem:fix_zoom_parameter}
	{\em
		Define $F_p$ and $\hat L_p$ as in \eqref{eq:F_def} and
		\eqref{eq:Theta_def}, respectively.
		Fix $p \in \mathcal{P}$, and consider
		the non-switched system
		\begin{equation}
			\label{eq:noSwitchSystem}
			\dot z = F_p z + \hat L_p 
			(q_{\mu}(y) - y).
		\end{equation}
		Choose $\kappa > 0$. If 
		$M$ satisfies
		\begin{equation}
			\label{eq:M_Delta_cond}
			\sqrt{\underline \lambda_P} M > \sqrt{\overline \lambda_P} \Theta \Delta(1+ \kappa)
			C_{\max},
		\end{equation}
		where 
		$\overline \lambda_P$, 
		$\underline \lambda_P$
		$C_{\max}$, and $\Theta$ are defined by 
		\eqref{eq:alpha_def} and
		\eqref{eq:Theta_def},
		then the following two level sets of the Lyapunov function
		$V_p(z):= z^{\top}P_p z$
		are invariant regions for every trajectory of \eqref{eq:noSwitchSystem}:
		\begin{align}
			\mathscr{R}_1 (\mu,p) &:= 
			\left\{
			z\in \mathbb{R}^{\sf n}:V_p(z) \leq 
			\frac{\underline \lambda_P M^2\mu^2}{C_{\max}^2}
			\right\}
			\label{eq:R1_def} \\
			\mathscr{R}_2 (\mu,p) &:= 
			\left\{
			z\in \mathbb{R}^{\sf n}:V_p(z) \leq 
			\overline \lambda_P (\Theta \Delta (1+\kappa))^2\mu^2
			\right\}.
			\label{eq:R2_def}
		\end{align}
		Furthermore, if $z(t) \in \mathscr{R}_1 (\mu,p) \setminus 
		\mathscr{R}_2 (\mu,p)$ for all $t \in [t_1,t_2]$, then
		\begin{equation}
			\label{eq:Lyapunov_decrease}
			V_p(z(t_2)) \leq
			V_p(z(t_1)) - 
			(t_2 - t_1) \underline \lambda_Q
			\kappa (1+\kappa)(\Theta \Delta \mu)^2
		\end{equation}
		for every $p \in \mathcal{P}$.
		Hence if $T$ satisfies
		\begin{equation}
			\label{eq:T_def}
			T >
			\frac{\underline \lambda_P M^2 - \overline \lambda_P (\Theta 
				\Delta(1+\kappa) C_{\max})^2}
			{\underline \lambda_Q \kappa(1+\kappa)(\Theta \Delta C_{\max})^2},
		\end{equation}
		then every trajectory of \eqref{eq:noSwitchSystem} with an
		initial state $z(0) \in \mathscr{R}_1 (\mu,p)$ satisfies
		$z(T) \in \mathscr{R}_2 (\mu,p)$．
	}
\end{lemma}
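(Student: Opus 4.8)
The plan is to base the entire argument on the single quadratic Lyapunov function $V_p(z) = z^{\top}P_p z$ and its derivative along \eqref{eq:noSwitchSystem}. Writing $e := q_{\mu}(y) - y$ for the quantization error and using the Lyapunov equation \eqref{eq:Lyapunov_cont}, I would first compute
\[
\dot V_p = -z^{\top}Q_p z + 2 z^{\top}P_p \hat L_p e \le -\underline \lambda_Q |z|^2 + 2\|P_p \hat L_p\|\,|z|\,|e|.
\]
The decisive preliminary step is to control $|e|$ inside $\mathscr{R}_1(\mu,p)$. Since $x$ is the first block of $z$, the output obeys $|y| = |C_p x| \le C_{\max}|z|$, while $z \in \mathscr{R}_1(\mu,p)$ together with $\underline \lambda_P |z|^2 \le V_p(z)$ yields $|z| \le M\mu/C_{\max}$, hence $|y| \le M\mu$. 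The non-saturation bound \eqref{eq:quantizer_cond1_nonSaturation}, rescaled through \eqref{eq:zoom_q}, then gives $|e| \le \Delta\mu$. Substituting this together with $2\|P_p\hat L_p\| \le 2\max_{p}\|P_p\hat L_p\| = \Theta\underline \lambda_Q$ from \eqref{eq:Theta_def} produces the clean estimate
\[
\dot V_p \le -\underline \lambda_Q\,|z|\,(|z| - \Theta\Delta\mu),
\]
valid for every $z \in \mathscr{R}_1(\mu,p)$; in particular $\dot V_p < 0$ whenever $|z| > \Theta\Delta\mu$.

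Next I would establish invariance of the two level sets by checking the sign of $\dot V_p$ on their boundaries, using this time the reverse bound $V_p(z) \le \overline \lambda_P |z|^2$. A point on the boundary of $\mathscr{R}_2(\mu,p)$ satisfies $|z| \ge \Theta\Delta(1+\kappa)\mu > \Theta\Delta\mu$, so $\dot V_p < 0$ there and $\mathscr{R}_2$ cannot be exited. On the boundary of $\mathscr{R}_1(\mu,p)$ the same bound gives $|z|^2 \ge \underline \lambda_P M^2\mu^2/(\overline \lambda_P C_{\max}^2)$, and squaring the hypothesis \eqref{eq:M_Delta_cond} shows this exceeds $(\Theta\Delta(1+\kappa)\mu)^2 > (\Theta\Delta\mu)^2$; hence $\dot V_p < 0$ on that boundary as well. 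The same squared form of \eqref{eq:M_Delta_cond} simultaneously yields $\mathscr{R}_2(\mu,p) \subseteq \mathscr{R}_1(\mu,p)$, which is what makes the two-set picture consistent.

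For the decrease estimate \eqref{eq:Lyapunov_decrease}, I would observe that on $\mathscr{R}_1(\mu,p)\setminus\mathscr{R}_2(\mu,p)$ the inequality $V_p(z) > \overline \lambda_P(\Theta\Delta(1+\kappa))^2\mu^2$ combined with $V_p(z) \le \overline \lambda_P|z|^2$ forces $|z| > \Theta\Delta(1+\kappa)\mu$. Feeding this into the differential inequality makes both factors strictly positive and bounded below, namely $|z| > \Theta\Delta(1+\kappa)\mu$ and $|z| - \Theta\Delta\mu > \Theta\Delta\kappa\mu$, so that $\dot V_p < -\underline \lambda_Q\kappa(1+\kappa)(\Theta\Delta\mu)^2$. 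Integrating over $[t_1,t_2]$ — legitimate in the Filippov sense because the bound on $\dot V_p$ holds for every admissible error selection with $|e|\le\Delta\mu$ — yields \eqref{eq:Lyapunov_decrease}.

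Finally, for the finite-time reaching statement I would argue by contradiction: if a trajectory started in $\mathscr{R}_1(\mu,p)$ stayed outside $\mathscr{R}_2(\mu,p)$ throughout $[0,T]$, then invariance of $\mathscr{R}_1$ keeps it in $\mathscr{R}_1\setminus\mathscr{R}_2$, so \eqref{eq:Lyapunov_decrease} applies over the whole interval. Bounding $V_p(z(0)) \le \underline \lambda_P M^2\mu^2/C_{\max}^2$ and comparing the resulting value of $V_p(z(T))$ with the defining level of $\mathscr{R}_2$ shows that the strict inequality \eqref{eq:T_def} on $T$ forces $V_p(z(T)) < \overline \lambda_P(\Theta\Delta(1+\kappa))^2\mu^2$, i.e.\ $z(T)\in\mathscr{R}_2$, a contradiction. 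Hence the trajectory enters $\mathscr{R}_2$ at some $t^{\ast} \le T$, and invariance of $\mathscr{R}_2$ keeps it there up to time $T$. I expect the only real subtlety to be bookkeeping the eigenvalue bounds in the correct direction — using $\underline \lambda_P|z|^2 \le V_p(z) \le \overline \lambda_P|z|^2$ on the appropriate side each time — since that is exactly where the hypothesis \eqref{eq:M_Delta_cond} enters and where an inequality reversed by mistake would break both the invariance of $\mathscr{R}_1$ and the matching of the threshold in \eqref{eq:T_def}.
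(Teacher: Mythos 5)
Your proposal is correct and follows exactly the standard Lyapunov argument that the paper itself invokes by deferring to Lemma~5 of Liberzon (2003): bound the quantization error by $\Delta\mu$ inside $\mathscr{R}_1$ via non-saturation, derive $\dot V_p \le -\underline\lambda_Q|z|\left(|z|-\Theta\Delta\mu\right)$, use $\underline\lambda_P|z|^2\le V_p(z)\le\overline\lambda_P|z|^2$ with the squared form of \eqref{eq:M_Delta_cond} to get invariance, the decrease rate, and the nesting $\mathscr{R}_2\subseteq\mathscr{R}_1$, and then integrate against \eqref{eq:T_def}. All the eigenvalue bounds and the final comparison with \eqref{eq:T_def} check out, so no gap.
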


\begin{proof}
	Since the mode $p \in \mathcal{P}$ is fixed,
	this lemma is a trivial extension of Lemma~5 
	in \cite{Liberzon2003Automatica} for single-modal systems.
	We therefore omit its proof;
	see also the conference version \cite{WakaikiMTNS2014}.
\end{proof}

Using Lemma \ref{lem:fix_zoom_parameter},
we obtain an update rule of the ``zoom'' parameter $\mu$
to drive the state $z$ to the origin.
\begin{theorem}
	\label{lem:convergence_to_origin}
	{\em
		Consider the system \eqref{eq:noSwitchSystem} under
		the same assumptions as in Lemma \ref{lem:fix_zoom_parameter}.
		Assume that $z(t_0)\in \mathscr{R}_1(\mu(t_0),\sigma(t_0))$.
		For each $p_1,p_2 \in \mathcal{P}$ with
		$p_1 \not= p_2$, 
		the positive definite matrices $P_{p_1}$ and $P_{p_2}$ in 
		the Lyapunov equation \eqref{eq:Lyapunov_cont} satisfy
		\begin{equation}
			\label{eq:Lyapunov_cond}
			z^{\top} J_{p_2,p_1}^{\top}
			P_{p_2}
			J_{p_2,p_1}
			z \leq  c_{p_2,p_1} \cdot z^{\top} P_{p_1} z
			\qquad (z \in \mathbb{R}^{2{\sf n}})
		\end{equation}
		for some $c_{p_2,p_1} > 0$. Define $c$ and $\Omega$ by
		\begin{equation}
			\label{eq:c_def_cont}
			c := 
			\max
			\left\{
			1,~ \max_{p_1,p_2 \in \mathcal{P}, p_1 \not= p_2}
			c_{p_2,p_1}
			\right\}
		\end{equation}
		\begin{equation}
			\label{eq:Omega_def}
			\Omega := \sqrt{\frac{\overline \lambda_P}{\underline \lambda_P}}
			\frac{\Theta\Delta (1+\kappa)C_{\max}}{M} < 1.
		\end{equation}
		Fix $T > 0$ so that \eqref{eq:T_def} is satisfied, and
		set the ``zoom'' parameter $\mu(t_0+kT+t)$ for all $k \in \mathbb{Z}$ and
		$t \in (0,T]$ in the following way:
		If no switches occur in the interval 
		$(t_0+kT, t_0+(k+1)T]$, then
		\begin{align}
			\label{eq:mu_update_non_switched}
			\mu(t_0+kT+t) =
			\begin{cases}
				\mu(t_0+kT) & (0 < t < T) \\
				\Omega \mu(t_0)
				& (t = T);
			\end{cases}
		\end{align}
		otherwise,
		\begin{align}
			\label{eq:mu_update_switched}
			\mu(t_0+kT+t) =
			\begin{cases}
				\mu(t_0+kT) & (0 < t < t_1) \\
				\sqrt{
					\prod_{\ell = 0}^{i-1}c_{\sigma(t_{\ell + 1}), \sigma(t_{\ell})}} 
				\cdot \mu(t_0) &  (t_i \leq t < t_{i+1},~i = 1,\dots,n)  \\
				\Omega 	\prod_{\ell = 0}^{n-1}
				c_{\sigma(t_{\ell + 1}), \sigma(t_{\ell})} \cdot \mu(t_0)
				&  (t = T),
			\end{cases}
		\end{align}
		where $t_1,\dots, t_n$ are the switching times in the interval 
		$(t_0+kT, t_0+(k+1)T]$.
		Then $z(t) \in \mathscr{R}_1(\mu(t),\sigma(t))$ for all
		$t \geq t_0$. Furthermore, 
		if $\tau_a$ satisfies
		\begin{equation}
			\label{eq:adt_cond}
			\tau_a > \frac{\log(c)}{2\log(1/\Omega)} T,
		\end{equation}
		then $\lim_{t \to \infty} z(t) = 0$.
	}
\end{theorem}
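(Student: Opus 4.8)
The plan is to treat the prescribed rule as a period-$T$ recursion on the admissible ``size'' of the state and to reduce both assertions to the two facts supplied by Lemma~\ref{lem:fix_zoom_parameter}: invariance of the level sets and the finite-time passage from $\mathscr{R}_1$ to $\mathscr{R}_2$. The first thing I would record is the exact scaling identity between the two regions. Comparing \eqref{eq:R1_def} and \eqref{eq:R2_def} with the definition \eqref{eq:Omega_def} of $\Omega$ gives
\[
\mathscr{R}_2(\mu,p)=\mathscr{R}_1(\Omega\mu,p)\qquad(p\in\mathcal{P},~\mu>0),
\]
so membership $z\in\mathscr{R}_1(\mu,p)$ is equivalent to the scalar bound $V_p(z)\le \rho\mu^2$ with $\rho:=\underline\lambda_P M^2/C_{\max}^2$, and a switch-free passage over time $T$ is precisely a contraction of the admissible $\mu$ by the factor $\Omega<1$.

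I would prove (i) by induction over the intervals $[t_0+kT,t_0+(k+1)T]$, carrying the invariant $V_{\sigma(t)}(z(t))\le\rho\mu(t)^2$. On a maximal switch-free subinterval, $\mathscr{R}_1$ is invariant by Lemma~\ref{lem:fix_zoom_parameter}, so the invariant persists while $\mu$ is held constant. At a switching time $t_k$ from $p_1$ to $p_2$ I would invoke \eqref{eq:Lyapunov_cond}: $V_{p_2}(z(t_k))=z(t_k^-)^\top J_{p_2,p_1}^\top P_{p_2}J_{p_2,p_1}z(t_k^-)\le c_{p_2,p_1}V_{p_1}(z(t_k^-))$; since the rule multiplies $\mu$ by $\sqrt{c_{p_2,p_1}}$ at the same instant, $\rho\mu^2$ is multiplied by $c_{p_2,p_1}$ and the invariant survives the jump. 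On a full switch-free interval the finite-time part of Lemma~\ref{lem:fix_zoom_parameter} forces $z$ into $\mathscr{R}_2(\mu_k,p)=\mathscr{R}_1(\Omega\mu_k,p)$ for $p=\sigma(t_0+kT)$, which is exactly the starting region of the next interval after the terminal $\Omega$-update, so the invariant propagates and (i) follows.

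For (ii), from (i) and $V_p(z)\ge\underline\lambda_P|z|^2$ one gets $|z(t)|\le M\mu(t)/C_{\max}$, so it suffices to show $\mu(t)\to0$. Writing $\mu_k:=\mu(t_0+kT)$ and letting $n_k$ be the number of switches in the $k$-th interval, the rule contracts by $\Omega$ once per interval and inflates by $\sqrt{c}$ per switch, so $\mu_{k+1}\le\Omega\,c^{\,n_k/2}\mu_k$ and hence
\[
\mu_m\le \Omega^{m}\,c^{\,N_\sigma(t_0+mT,t_0)/2}\mu_0.
\]
Inserting the average-dwell-time bound $N_\sigma(t_0+mT,t_0)\le N_0+mT/\tau_a$ yields $\mu_m\le c^{N_0/2}\big(\Omega\,c^{\,T/(2\tau_a)}\big)^m\mu_0$, and the hypothesis \eqref{eq:adt_cond}, i.e. $\tau_a>\tfrac{\log c}{2\log(1/\Omega)}T$, is exactly what makes the common ratio $\Omega\,c^{\,T/(2\tau_a)}<1$. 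Thus $\mu_k\to0$ geometrically; since within each interval $\mu$ exceeds $\mu_k$ by at most the bounded factor $c^{\,n_k/2}$ with $n_k\le N_0+T/\tau_a$, this upgrades to $\mu(t)\to0$, and hence $z(t)\to0$.

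The hard part will be step (i) on an interval that contains switches: after the last switch there may be too little time left to invoke the full-time passage of Lemma~\ref{lem:fix_zoom_parameter}, so the terminal $\Omega$-contraction of $\mu$ cannot be justified merely by ``reaching $\mathscr{R}_2$''. I would argue instead that the accumulated continuous decrease \eqref{eq:Lyapunov_decrease} over the whole length-$T$ window, whose surplus is guaranteed by the strict inequality \eqref{eq:T_def}, dominates the finitely many upward jumps by $c_{p_2,p_1}$ admissible under the dwell-time bound, so that $V_{\sigma}(z)$ still lands below $\rho\,\mu^2$ after the terminal update. This competition between the per-interval $\Omega$-decay and the per-switch $\sqrt{c}$-growth is also what fixes the threshold in \eqref{eq:adt_cond}, the factor $\tfrac12$ reflecting that the membership constraint is quadratic in $\mu$.
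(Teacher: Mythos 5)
Your proposal is correct and follows essentially the same route as the paper's proof: the identity $\mathscr{R}_2(\mu,p)=\mathscr{R}_1(\Omega\mu,p)$ is precisely the ``routine calculation'' the paper invokes, the per-period induction in which the jump \eqref{eq:Lyapunov_cond} is exactly neutralized by multiplying $\mu$ by $\sqrt{c_{\sigma(t_i),\sigma(t_{i-1})}}$ is the paper's argument for invariance of $\mathscr{R}_1$, and your geometric bound $\mu_m\le c^{N_0/2}\bigl(\Omega\,c^{T/(2\tau_a)}\bigr)^m\mu_0$ is identical to \eqref{eq:mu(t_0+MT)}. The only difference is cosmetic: at the endpoint of a period containing switches the paper assumes $z((t_0+T)^-)\notin\mathscr{R}_2$ and derives a contradiction from the accumulated decrease \eqref{eq:Lyapunov_decrease} plus \eqref{eq:T_def}, whereas you run the same accumulation directly (note that within a single period the upward jumps are exactly cancelled by the $\mu$-rescaling rather than ``dominated,'' so no dwell-time condition is needed for the invariance claim; the competition with $c$ enters only in the convergence of $\mu$, as your final sentence correctly states).
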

\begin{proof}
	To prove that $z(t) \in \mathscr{R}_1(\mu(t),\sigma(t))$
	for all $t \geq t_0$,
	it is enough to show that if
	$z(t_0) \in \mathscr{R}_1(\mu(t_0),\sigma(t_0))$, then
	\begin{equation}
		\label{eq:z_t0_t0T}
		z(t) \in \mathscr{R}_1(\mu(t),\sigma(t)) \qquad
		(t_0 \leq t \leq t_0+T)
	\end{equation}
	
	Let us first investigate the case 
	without switching on the interval $(t_0, t_0+T]$.
	We see from
	Lemma \ref{lem:fix_zoom_parameter} that
	$z(t) \in \mathscr{R}_1(\mu(t),\sigma(t))$ for all
	$t \in [t_0, t_0+T)$ and that
	$z((t_0 + T)^-)\in \mathscr{R}_2(\mu(t_0),\sigma(t_0))$.
	Since $\mu(t_0+T) = \Omega \mu(t_0)$, 
	a routine calculation shows that
	$z(t_0+T) \in \mathscr{R}_1(\mu(t_0+T),\sigma(t_0+T))$.
	
	We now study the switched case. 
	Let $t_1,t_2,\dots,t_n$ be the switching times in the interval $(t_0, t_0+T]$.
	Let us define $t_{n+1} := t_0+T$ for simplicity of notation.
	Lemma \ref{lem:fix_zoom_parameter} implies that
	$\mathscr{R}_i (\mu(t_k),\sigma(t_k))$ ($i=1,2$)
	are invariant sets for all $t \in [t_k,t_{k+1})$, $k=0,\dots,n$.
	Moreover, by \eqref{eq:Lyapunov_cond},
	if $z(t_k^-) \in \mathscr{R}_i (\mu(t_k^-),\sigma(t_k^-))$, 
	then
	$z(t_k) \in \mathscr{R}_i 
	(\mu(t_k),\sigma(t_k))$ 
	($i=1,2$) for all $k = 1,\dots, n$.
	Hence $z(t_0) \in \mathscr{R}_1 (\mu(t_0),\sigma(t_0))$
	leads to
	\begin{equation}
		\label{eq:z_before_tn1}
		z(t) \in \mathscr{R}_1 (\mu(t),\sigma(t)) \qquad
		(t_0 \leq t < t_{n+1}).
	\end{equation}
	
	To obtain 
	\begin{equation}
		\label{eq:z_tn1}
		z(t_{n+1}) \in \mathscr{R}_1 (\mu(t_{n+1}),\sigma(t_{n+1})),
	\end{equation}
	we show that $z(t_{n+1}^-) \in \mathscr{R}_2 
	(\mu(t_{n+1}^-),\sigma(t_{n+1}^-))$. 
	Assume, to reach a contradiction, that 
	\begin{equation}
		\label{eq:z_NOt_R2}
		z(t_{n+1}^-) \not\in \mathscr{R}_2 
		(\mu(t_{n+1}^-),\sigma(t_{n+1}^-)).
	\end{equation}
	Since $\mathscr{R}_2 
	(\mu(t),\sigma(t))$ is an invariant region for all
	$t \in [t_0, t_{n+1})$, we also have 
	\begin{equation*}
		z(t) \not\in \mathscr{R}_2 
		(\mu(t),\sigma(t))
		\qquad(t_0 \leq t < t_{n+1}).
	\end{equation*}
	Define a Lyapunov function $V_p(z) := z^{\top}P_pz$ for
	each $p \in \mathcal{P}$.
	Since a Filippov solution is (absolutely) continuous,
	$\lim_{t \nearrow t_{k}} V_{\sigma(t)}(z(t))$ exists 
	for each $k=1,\dots,n+1$. From
	\eqref{eq:z_NOt_R2}, we obtain
	\begin{equation}
		\label{eq:Lyapunov_t_0+T}
		\lim_{t \nearrow t_{n+1}} V_{\sigma(t)}(z(t))
		\geq \overline \lambda_P (\Theta \Delta (1+\kappa))^2\mu(t_n)^2.
	\end{equation}
	
	On the other hand, since 
	$z(t) \in \mathscr{R}_1 (\mu(t),\sigma(t)) \setminus
	\mathscr{R}_2 (\mu(t),\sigma(t))$ for
	all $t \in [t_0, t_{1}]$, 
	\eqref{eq:Lyapunov_decrease} gives
	\begin{align*}
		\lim_{t \nearrow t_1}V_{\sigma(t)}(z(t))
		\!\leq \!
		\left(
		\frac{\underline \lambda_P M^2}{C_{\max}^2}
		\!-\!
		(t_1\!-\!t_0)\underline \lambda_Q \kappa(1\!+\!\kappa)(\Theta \Delta)^2
		\right)\mu(t_0)^2,
	\end{align*}
	and hence we have from 
	$\mu(t_1) = \sqrt{
		c_{\sigma(t_1),\sigma(t_0)}
	}
	\mu(t_0)	
	$
	that
	\begin{align*}
		V_{\sigma(t_1)}(z(t_1))
		&=
		z(t_1^-)^{\top}J_{\sigma(t_1), \sigma(t_1^-)}^{\top}
		P_{\sigma({t_1})}J_{\sigma(t_1), \sigma(t_1^-)}z(t_1^-) \\
		&\leq
		c_{\sigma(t_1), \sigma(t_0)} 
		\cdot \left( \lim_{t \nearrow t_1}V_{\sigma(t)}(z(t))\right) \\
		&=
		\left(
		\frac{\underline \lambda_P M^2}{C_{\max}^2}
		\!-\!
		(t_1\!-\!t_0)\underline \lambda_Q \kappa(1\!+\!\kappa)(\Theta \Delta)^2
		\right)
		\!\mu(t_1)^2.
	\end{align*}
	
	If we repeat this process and use \eqref{eq:T_def},
	then
	\begin{align}
		\lim_{t \nearrow t_{n+1}}
		V_{\sigma(t)}(z(t)) 
		&\leq
		\left(
		\frac{\underline \lambda_P M^2}{C_{\max}^2}
		-
		T \underline \lambda_Q \kappa (1+\kappa)(\Theta \Delta )^2
		\right) \mu(t_n)^2 \notag \\
		&<
		\overline \lambda_P (\Theta \Delta (1+\kappa))^2
		\mu(t_n)^2,
		\label{eq:Lyapunov_t_0+t_contradiction}
	\end{align}
	which contradicts \eqref{eq:Lyapunov_t_0+T}.
	Thus we obtain 
	\[z(t_{n+1}^-) \in \mathscr{R}_2 (\mu(t_{n+1}^-),\sigma(t_{n+1}^-)),\]
	and hence
	\eqref{eq:z_tn1} holds.
	
	From \eqref{eq:z_before_tn1} and \eqref{eq:z_tn1},
	we derive the desired result \eqref{eq:z_t0_t0T}, because
	$t_{n+1} = t_0+T$.
	
	Finally, 
	since $c \geq 1$, 
	\eqref{eq:ADT_cond} gives
	\begin{align}
		\mu(t_0+mT+t) 
		\leq \Omega^m \sqrt{c^{N_{\sigma}(t_0+mT+t, t_0)}}
		\mu(t_0) 
		\leq \sqrt{c^{N_0+T/\tau_a}} \cdot
		\left(\Omega \sqrt{c^{T/\tau_a}} \right)^m \mu(t_0)
		\label{eq:mu(t_0+MT)}
	\end{align}
	for every $m \geq 0$ and $t \in [0,T)$.
	If $\Omega \sqrt{c^{T/\tau_a}} < 1$, that is, if
	the average dwell time $\tau_a$ satisfies \eqref{eq:adt_cond},
	then $\lim_{t\to \infty}\mu(t) =0$.
	Since $z(t) \in \mathscr{R}_1(\mu(t),\sigma(t))$ for 
	all $t \geq t_0$, we obtain
	$\lim_{t \to \infty} z(t)= 0$.
\end{proof}


\begin{remark}
	\noindent
	{(a)~}
	We can compute $c_{p_2,p_1}$ by linear matrix inequalities.
	Moreover, if the jump matrix $R_{p_2,p_1}$ in \eqref{eq:state_jump}
	is invertible, 
	then Lemma 13 of \cite{Tanwani2011} gives 
	an explicit formula for $c_{p_2,p_1}$.
	
	\noindent
	{(b)~}
	The proposed method is
	sensitive to the time-delay of the switching signal at the ``zooming-in'' stage.
	If the switching signal is delayed,
	a mode mismatch occurs between the plant and the controller.
	Here we do not proceed along this line to avoid technical issues.
	See also \cite{Ma2015} for the stabilization of 
	asynchronous switched systems with time-delays.
	
	\noindent
	{(c)~}
	We have updated the ``zoom'' parameter $\mu$ 
	at each switching time in the ``zooming-in'' stage.
	If we would not, switching could lead to instability of the closed-loop system.
	In fact, 
	since the state $z$ may not belong to the invariant region
	$\mathscr{R}_1(\mu,\sigma)$ without adjusting $\mu$,
	the quantizer may saturate.
	
	\noindent
	{(d)~}
	Similarly,
	``pre-emptively'' multiplying $\mu$ at time $T_0+kT$ by $c^n$
	does not work, either.
	This is because such an adjustment does not make 
	$\mathscr{R}_1(\mu,\sigma)$ invariant for the state
	trajectories. For example,
	consider the situation where
	the state $z$ belongs to $\mathscr{R}_2(\mu,\sigma)$ at $t = T_0+kT$
	due to this pre-emptively adjustment.
	Then $z$ does not converge to the origin.
	Let
	$t_1 > T_0+kT$ be a switching time.
	Since 
	$\mathscr{R}_2(\mu(t_1^-),\sigma(t_1^-))$
	may not be a subset of
	$\mathscr{R}_1(\mu(t_1),\sigma(t_1))$, 
	it follows that
	$z$ does not belong to the invariant region 
	$\mathscr{R}_1(\mu,\sigma)$ at $t =t_1$ in general.

\end{remark}

\section{The proof of Lyapunov stability} 

Let us denote by $\mathscr{B}_{\varepsilon}$ 
the open ball with center at the origin and radius $\varepsilon$
in $\mathbb{R}^{2{\sf n} \times 2 {\sf n}}$.
In what follows,
we use the same letters as in the previous section
and assume that 
the average dwell time $\tau_a$ satisfies
\eqref{eq:adt_cond}.

The proof consists of three steps:
\begin{enumerate}
	\item
	Obtain an upper bound of
	the time $t_0$ at which the quantization process transitions
	from the ``zoom-out'' stage to the ``zoom-in'' stage.
	
	\item
	Show that there exists a time $t_{\varepsilon} \geq t_0$
	such that 
	the state $z$ satisfies $|z(t)| < \varepsilon$
	for all $t \geq t_{\varepsilon}$.
	
	\item
	Set $\delta > 0$ so that 
	if $|x(0)| < \delta$, then
	$|z(t)| < \varepsilon$
	for all $t < t_{\varepsilon}$.
\end{enumerate}
We break the proof of Lyapunov stability into the above three steps.

1)
Let $N \in \mathbb{N}$ satisfy \eqref{eq:N_sigma>0} and
let $\delta > 0$ be small enough to satisfy
\begin{equation}
	\label{eq:delta_cond_for_u=0}
	C_{\max}
	\cdot
	\Lambda^{N_0}
	\left( 
	\Lambda^{1/\tau_a}e^{\Gamma}
	\right)^{N\tau} \delta < \Delta_0.
\end{equation}
We see from the state bound \eqref{eq:state_bound} that 
$q_{\mu(t)}(y(t)) = 0$ for $t \in [0, N\tau]$
from Assumption \ref{ass:near origin}.
As we mentioned in Remark \ref{rem:brief_remark_for_Lyap} briefly,
Lemma \ref{lem:ADT_upperbound} implies that
the time $t_0$, at which the stage changes from ``zooming-out'' to ``zooming-in'',
satisfies $t_0 \leq N\tau$ for every switching signal
with the average dwell-time assumption \eqref{eq:ADT_cond}.

2) Fix $\alpha > 0$.
By \eqref{eq:xi_t0_tau},
$\xi(t_0) = 0$, and hence we see from \eqref{eq:mu_s_0_tau}
that $\mu(t_0)$ achieving 
$z(t_0) \in \mathscr{R}_1(\mu(t_0),\sigma(t_0))$
can be chosen so that
\begin{equation}
	\label{eq:bar_mu_ineq}
	\alpha \leq
	\mu(t_0) \leq
	\bar \mu,
\end{equation}
where $\bar \mu$ is defined by
\begin{align*}
	\bar \mu :=
	\max
	\Biggl\{\alpha,~~
	2\sqrt{\frac{\overline \lambda_P}{\underline \lambda_P}}
	\frac{ \Delta \tau C_{\max} 
		\Lambda^{N_0}
		\left(
		\Lambda^{1/\tau_a} e^{\Gamma}
		\right)^{(1+\chi) N\tau}}{M} \cdot
	\max_{p \in \mathcal{P}} 
	\left(
	\|W_p(\tau)^{-1}\|\Upsilon_p(\tau) \left\| e^{A_p\tau} \right\|
	\right)
	\Biggr\}.
\end{align*}
Note that $\bar \mu$ is independent of switching signals.

Let $\bar m > 0$ be the smallest integer satisfying
\begin{equation}
	\label{eq:m_cond_Lyapunov}
	\bar m > \frac{\log(\bar \mu M \sqrt{c^{N_0+T/\tau_a}}/
		(\varepsilon C_{\max}))}
	{\log (1/(\Omega\sqrt{c^{T/\tau_a}}))}.
\end{equation}
Define
$t_{\varepsilon} := t_0+\bar m T$.
Since $c \geq 1$ and
$\Omega \sqrt{c^{T/\tau_a}} < 1$, 
\eqref{eq:mu_update_non_switched} and
\eqref{eq:mu_update_switched} give
\begin{align*}
	\mu(t_{\varepsilon}+kT+t) 
	&= 
	\mu(t_0+(\bar m +k)T+t)) \\
	&\leq 
	\sqrt{c^{N_0+T/\tau_a}} \cdot
	\left(\Omega \sqrt{c^{T/\tau_a}} \right)^{\bar m+k} \mu(t_0) \\
	&\leq
	\sqrt{c^{N_0+T/\tau_a}} \cdot
	\left(\Omega \sqrt{c^{T/\tau_a}} \right)^{\bar m} \bar \mu
\end{align*}
for all $k \geq 0$ and $t \in [0,T)$. Since $\bar m$ satisfies
\eqref{eq:m_cond_Lyapunov}, it follows that that 
$\mathscr{R}_1(\mu(t),\sigma(t))$ lies in
$\mathscr{B}_{\varepsilon}$ for all $t \geq t_{\varepsilon}$.
Recall that 
$z(t_0) \in\mathscr{R}_1(\mu(t_0),\sigma(t_0))$ and that
$\mathscr{R}_1(\mu(t),\sigma(t))$ is
an invariant region for all $t \geq t_0$
from Theorem \ref{lem:convergence_to_origin}. Thus we have
\begin{equation}
	\label{eq:z_eps2}
	|z(t)| < \varepsilon \qquad (t \geq t_{\varepsilon}).
\end{equation}

3) Define
$
\underline{c} := \min \{
1, \min_{p_1,p_2 \in \mathcal{P},p_1 \not= p_2}c_{p_2,p_1}
\}.
$
Since $\underline c \leq 1$,
it follows from 
\eqref{eq:mu_update_non_switched}
\eqref{eq:mu_update_switched},
and \eqref{eq:bar_mu_ineq} that
\begin{align}
	\mu(t) 
	\geq \Omega^{\bar m}
	\sqrt{\underline{c}^{N_0+\bar m T/\tau_a}}\mu(t_0) 
	\geq 
	\alpha\Omega^{\bar m}
	\sqrt{\underline{c}^{N_0+\bar m T/\tau_a}}=: \eta
	\label{eq:mu_Omega_bound}.
\end{align}
for all $t \in [t_0, t_{\varepsilon}]$.
Set
$\delta > 0$ so that
\begin{gather}
	\label{eq:delta_cond1}
	C_{\max} \cdot
	\Lambda^{N_0}
	\left(
	\Lambda^{1/\tau_a} 
	e^{\Gamma }
	\right)^{N\tau +\bar m T} 
	\delta < 
	\eta \Delta_0\\
	\label{eq:delta_cond2}
	\Lambda^{N_0}
	\left(
	\Lambda^{1/\tau_a} 
	e^{\Gamma }
	\right)^{N\tau +\bar m T}  \delta < 
	\varepsilon/2.
\end{gather}

Since $t_{\varepsilon} = t_0+\bar m T \leq N\tau + \bar m T$, 
by \eqref{eq:state_bound}, \eqref{eq:delta_cond_for_u=0}, 
\eqref{eq:mu_Omega_bound}, and \eqref{eq:delta_cond1}, 
Assumption \ref{ass:near origin} gives
$q_{\mu(t)}(y(t))=0$ in the interval $[0,t_{\varepsilon}]$, so
$\xi(t) = 0$ and $u(t) = 0$ in the same interval.
Combining this with \eqref{eq:delta_cond2}, we obtain
$|x(t)| \leq 
\Lambda^{N_0}
\left(
\Lambda^{1/\tau_a}
e^{\Gamma}
\right)^{ (N\tau + \bar m T)} 
\delta < \varepsilon/2$
for all $t < t_{\varepsilon}$. 
Thus 
\begin{equation}
	\label{eq:z_eps1}
	|z(t)| = 2|x(t)| < \varepsilon \qquad (t < t_{\varepsilon}).
\end{equation}
From \eqref{eq:z_eps2} and \eqref{eq:z_eps1}, we see that 
Lyapunov stability can be achieved.
\hspace*{\fill} $\Box$
%

\section{Numerical examples}
Consider the continuous-time switched system \eqref{eq:ClosedSystem} with
the following two modes:
\begin{align*}
	(A_1,B_1,C_1)
	&=
	\left(
	\begin{bmatrix}
		1 & -0.3 \\ 0.4 & -4
	\end{bmatrix},\ 
	\begin{bmatrix}
		1 \\ 0 
	\end{bmatrix},\ 
	\begin{bmatrix}
		1 & 1
	\end{bmatrix}
	\right)\\
	(A_2,B_2,C_2) &=
	\left( 
	\begin{bmatrix}
		-0.1 & 1 \\ -1 & 0.1
	\end{bmatrix},\ 
	\begin{bmatrix}
		0 \\ 1
	\end{bmatrix},\ 
	\begin{bmatrix}
		0 & -1
	\end{bmatrix}
	\right)
\end{align*}
with jump matrices $R_{1,2} = R_{2,1} = I$.
As the feedback gain and the observer gain of each mode, we take
\begin{align*}
	(K_1,L_1) &=
	\left( 
	\begin{bmatrix}
		-3 & -2
	\end{bmatrix},~
	\begin{bmatrix}
		-4 \\ 0 
	\end{bmatrix}
	\right)\\
	(K_2,L_2) &=
	\left( 
	\begin{bmatrix}
		0 & 1
	\end{bmatrix},~
	L_2 = 
	\begin{bmatrix}
		0 \\ -1
	\end{bmatrix}
	\right).
\end{align*}
Let $q$ be a uniform-type quantizer
with parameters 
$
M = 10$,
$
\Delta = 0.05.
$
The parameters $\tau, \bar \tau, \chi$ in 
the 
``zooming-out'' stage are $\tau = 0.5$, $\bar \tau = 1$, 
and $\chi = 0.1$.
Also, define $Q_1$ and $Q_2$ in \eqref{eq:Lyapunov_cont} and $\kappa$ in 
\eqref{eq:M_Delta_cond} by
$
Q_1 := 
\diag(6,6,2,6)$,
$
Q_2 :=
\diag(1,1,1,1),
$
$\kappa := 4.5$,
where $\diag(e_1,\dots,e_4)$ means a diagonal matrix whose diagonal
elements starting in the upper left corner are $e_1,\dots,e_4$.
Then we obtain 
$T = 0.6025$ in \eqref{eq:T_def},
$\Omega = 0.9063$ in \eqref{eq:Omega_def}, 
$c = 1.9867$ in \eqref{eq:c_def_cont}, and
$\tau_a = 2.0744$ in \eqref{eq:adt_cond}.

Figure \ref{fig:cont_simulation} (a) and (b) show that 
the Euclidean norm of the state $x$ and the estimate $\xi$, and
the ``zoom'' parameter $\mu$,
respectively,
with initial condition $x(0) = [5~-\!10]^{\top}$ and $\mu(0) = 1$.
The vertical dashed-dotted line indicates the switching times
$t=3.5,7,20$.
In this example, the ``zooming-out'' stage finished at $t = 0.5$.
We see 
the non-smoothness of $x, \xi$ and
the increase of $\mu$ at the switching times 
$t=3.5,7,20$ because of
switches and quantizer updates.
Not surprisingly, the
adjustments of $\mu$ 
in \eqref{eq:mu_s_0_tau} and \eqref{eq:mu_update_switched}
are conservative.

\begin{figure}
	\centering
	\subcaptionbox{Norms of state $x$ and estimate $\xi$.}
	{\includegraphics[width = 9cm,clip]{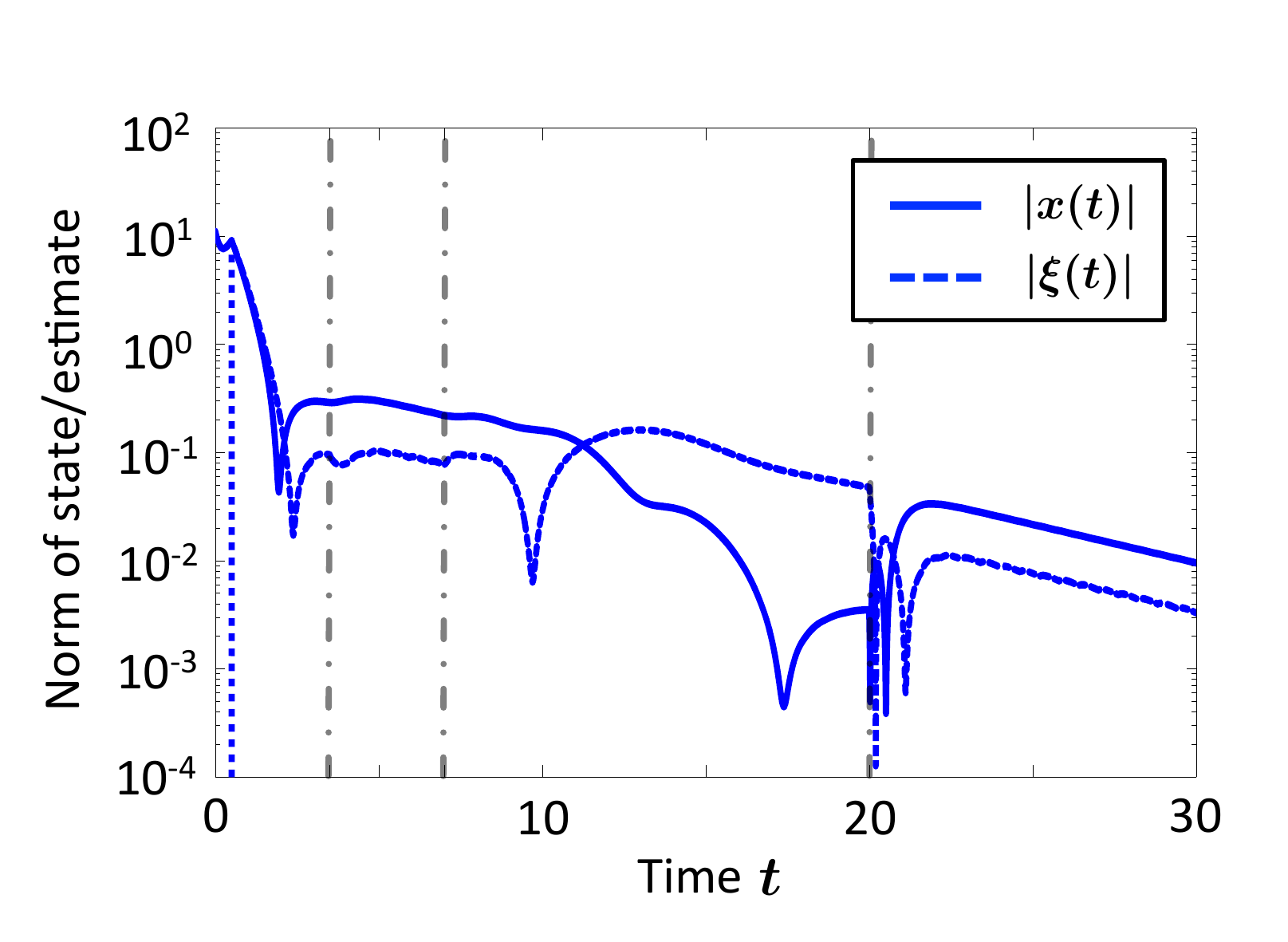}}
	
	\subcaptionbox{Zoom parameter $\mu$.}
	{\includegraphics[width = 9cm,clip]{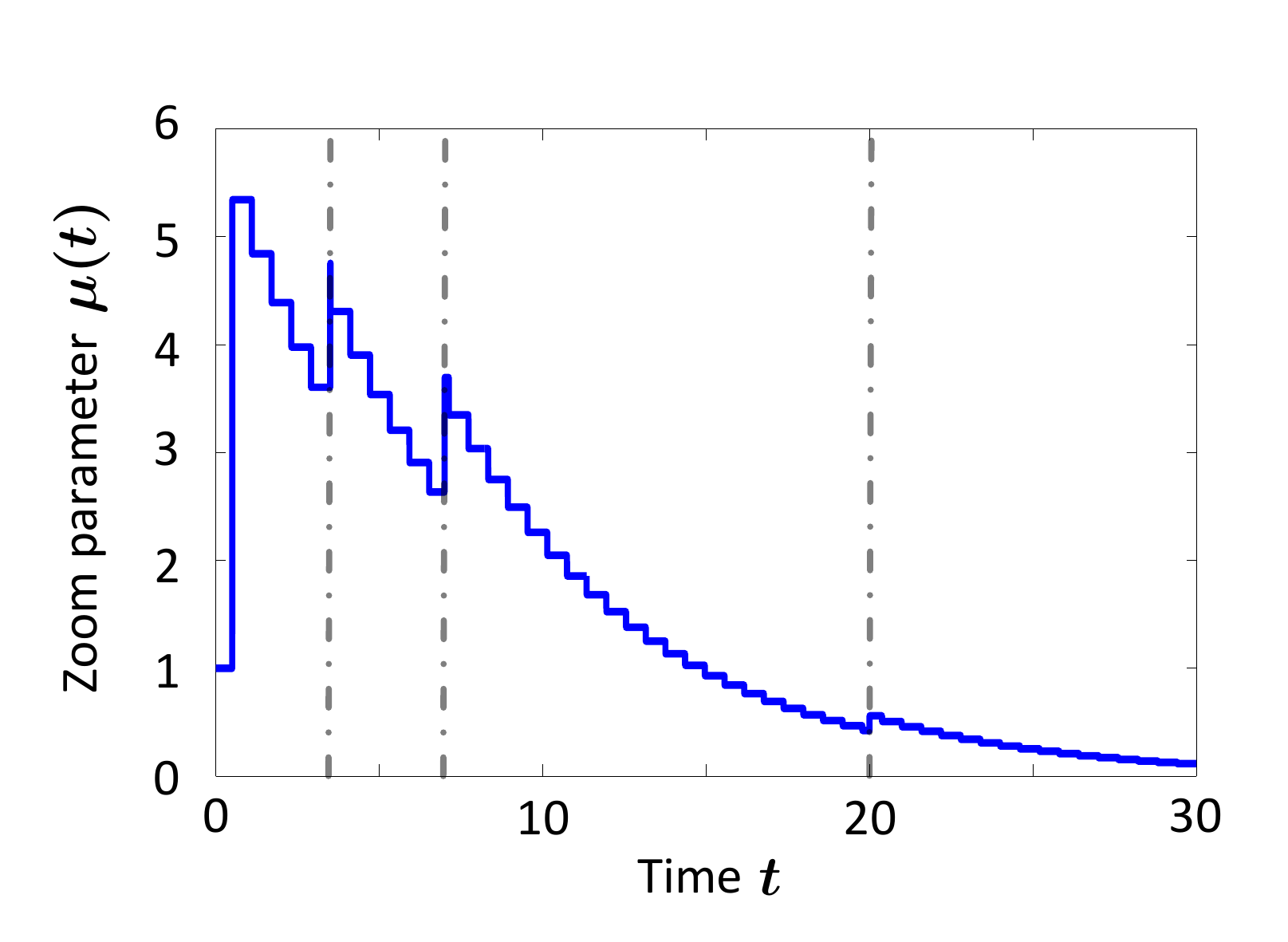}}
	\caption{Simulation with initial condition
		$x(0)= [5~-\!10]^{\top}$ and $\sigma(0) = 1$.
		The vertical dashed-dotted line indicates the switching times
		$t=3.5,7,20$.
		\label{fig:cont_simulation}}
\end{figure}

\section{Concluding remarks}
We have proposed an update rule of dynamic quantizers 
to stabilize continuous-time switched systems with quantized output
feedback.
The average dwell-time property has been utilized
for the state reconstruction in the 
``zooming-out'' stage and 
for convergence to the origin in the ``zooming-in'' stage.
The update rule not only periodically decreases the ``zoom'' parameter
to drive the state to the origin, but also adjusts the parameter
at each switching time
to avoid quantizer saturation.
Future work involves 
designing the controller and 
the quantizer simultaneously,
and addressing more general systems by incorporating
disturbances and nonlinear dynamics.

	\ifCLASSOPTIONcaptionsoff
	\newpage
	\fi
	

\end{document}